\newif\ifTwoColumn
\newif\ifTechReport

\TechReporttrue    

\ifTwoColumn
\documentclass[journal]{IEEEtran}
\else
\ifTechReport
\documentclass[journal,12pt,onecolumn,draftcls]{IEEEtran}
\else
\documentclass[journal,11pt,onecolumn,draftclsnofoot]{IEEEtran}
\fi
\fi

\usepackage[utf8]{inputenc} 
\usepackage[T1]{fontenc}    
\usepackage{url}            
\usepackage{booktabs}       
\usepackage{amsfonts,amsmath,amsthm,amssymb,mathrsfs}       
\usepackage{bbm}
\usepackage{cite}
\usepackage{mathtools,xparse}
\usepackage[makeroom]{cancel}
\usepackage[acronym]{glossaries}
\usepackage{glossaries-prefix}
\usepackage{color}
\usepackage{balance}
\usepackage{enumerate,enumitem}
\usepackage{algpseudocode}
\usepackage{subcaption}
\usepackage{nicefrac}       
\usepackage{microtype}      
\usepackage{xcolor}         
\usepackage{graphicx}
\usepackage[prependcaption,colorinlistoftodos]{todonotes}
\usepackage{multirow}
\usepackage{hyperref}
\usepackage{booktabs,tabularx}
\usepackage{cases}
\usepackage[ruled,vlined]{algorithm2e}
\usepackage{xifthen}
\usepackage{eurosym}
\usepackage{chngcntr}

\graphicspath{{./figures/}}

\DeclareMathSizes{10}{10}{6}{4}

\newcommand{\norm}[1]{\left\lVert#1\right\rVert}

\newcommand{\continuanceref}{}

\newcommand{\R}{\mathbb{R}}

\newcommand{\bbS}{\mathbb{S}}
\newcommand{\mc}[1]{\mathcal{#1}}

\newcommand{\eqdef}{\coloneqq}
\newcommand{\reqdef}{\eqqcolon}

\newcommand{\prob}[2][]{
	\ifthenelse{
		\isempty{#1}}{\mathbb{#2}}{\mathbb{#2}\left\{#1\right\}
	}
}
\newcommand{\expected}[2]{\mathbb{E}_{#1}\left[#2\right]}
\newcommand{\wexpected}[2]{\widehat{\mathbb{E}}_{#1}\left[#2\right]}
\newcommand{\wexpectedj}[2]{\widehat{\mathbb{E}}^j_{#1}\left[#2\right]}

\newcommand{\diag}{\mathrm{diag}}

\newcommand{\col}{\mathrm{col}}

\newtheorem{theorem}{Theorem}[section]
\newtheorem{definition}[theorem]{Definition}
\newtheorem{proposition}[theorem]{Proposition}
\newtheorem{lemma}[theorem]{Lemma}
\newtheorem{corollary}[theorem]{Corollary}
\newtheorem{example}[theorem]{Example}

\newtheorem{remark}[theorem]{Remark}
\newtheorem{standing}[theorem]{Standing Assumption}

\newcommand{\normaltext}[1]{\textnormal{#1}}

\usepackage[colorinlistoftodos]{todonotes}

\usepackage{xcolor,calc}

\makeglossaries
\newacronym{iid}{i.i.d.}{independent and identically distributed}
\newacronym{wrt}{w.r.t.}{with respect to}
\newacronym{ISS}{ISS}{input-to-state stability}
\newacronym{dISS}{$\delta$-ISS}{incremental input-to-state stability}
\newacronym{iIOS}{i-IOS}{incremental input-output stability}
\newacronym{IO}{IO}{input-output}
\newacronym{LTI}{LTI}{linear time-invariant}
\newacronym{SA}{SA}{stochastic approximation}
\newacronym{SAA}{SAA}{sample-average approximation}
\newacronym{IQC}{IQC}{incremental quadratic constraint}
\newacronym{SDP}{SDP}{semidefinite programming problem}
\newacronym{LQR}{LQR}{linear-quadratic regulator}
\newacronym{GD}{GD}{gradient descent}
\newacronym{NAG}{NAG}{Nesterov accelerated gradient}
\newacronym{HB}{HB}{heavy ball}
\newacronym{KRR}{KRR}{kernel ridge regression}
\newacronym{RKHS}{RKHS}{reproducing kernel Hilbert space}
\newacronym{ERM}{ERM}{empirical risk minimization}
\newacronym{LS}{LS}{least-squares}

\newglossaryentry{LMI}
{
	name={LMI},
	description={linear matrix inequality},
	first={\glsentrydesc{LMI} (\glsentrytext{LMI})},
	plural={LMIs},
	descriptionplural={linear matrix inequalities},
	firstplural={\glsentrydescplural{LMI} (\glsentryplural{LMI})}
}
\newglossaryentry{MI}
{
	name={MI},
	description={matrix inequality},
	first={\glsentrydesc{MI} (\glsentrytext{MI})},
	plural={MIs},
	descriptionplural={matrix inequalities},
	firstplural={\glsentrydescplural{MI} (\glsentryplural{MI})}
}

\begin{document}
	
\title{Generalization as a robust performance property of learning-enabled dynamical systems}
\author{Filippo Fabiani
\thanks{F. Fabiani is with the IMT School for Advanced Studies Lucca, Piazza San Francesco 19, 55100, Lucca, Italy ({\tt filippo.fabiani@imtlucca.it}).}}


\maketitle

\begin{abstract}
By focusing on algorithmic stability as a means of establishing  out-of-sample bounds, we provide a system-theoretic interpretation of generalization in learning-enabled dynamical systems arising in data-driven optimization and feedback control approximation. Given two neighboring datasets, we specifically model sample replacement as an exogenous disturbance acting on a sensitivity system, while the incremental behavior of the data-dependent operator is encoded through an integral quadratic constraint. By relying on dissipativity arguments, we establish a matrix inequality-based certificate and a uniform stability bound that separates the one-sample sensitivity of the learned operator, and an algorithm-dependent dynamical gain. The latter can then be optimized, offering a tractable tool for certifying and comparing generalization capabilities of learning dynamics. We show that our results recover classical ones for gradient descent, apply naturally to momentum-based methods such as heavy-ball and Nesterov acceleration, and extend to data-driven control. 
\end{abstract}

\ifTwoColumn
	\begin{IEEEkeywords}
		Machine learning, Statistical learning, Robust control 
	\end{IEEEkeywords}
\fi
\IEEEpeerreviewmaketitle

\glsresetall

\section{Introduction}\label{sec:intro}
\IEEEPARstart{I}{n} machine learning, \emph{generalization} refers to the ability of a certain learning-based model to perform well on new, unseen data, rather than only on the samples used for training \cite{shalev2014understanding}. Thus,
ensuring excellent generalization capabilities is recommended in modern dynamical systems whose evolution is governed by a learning policy $\hat\Phi(\cdot, \mc S_N)$ trained on dataset $\mc S_N$ with $N$ samples. A prominent example includes iterative methods for data-driven optimization and \gls{ERM} \cite{vapnik1991principles,kleywegt2002sample}, which may overfit in case of poor generalization. Another suitable instance stems from feedback control approximation, where data-driven proxies are employed to steer the system evolution \cite{care2023kernel,fabiani2023reliably}. Here, poor out-of-sample capabilities may lead to, e.g., degraded performance, or even instability.

In this context, \emph{algorithmic stability} \cite{bousquet2002stability} is a central tool to explain the generalization behavior of learning algorithms. In its classical form, stability quantifies the change in an algorithm's output, measured through a prescribed loss function, when a single training sample is replaced. When this change is small uniformly over neighboring datasets $\mc S_N$ and $\mc S_N^j$, where the latter differs from the former in the $j$-th sample, one obtains high-probability generalization guarantees through concentration arguments \cite{mcdiarmid1989method}. 
A standard approach to perform the stability analysis of iterative procedures consists in deriving algorithm-specific recursive bounds on the distance between two trajectories generated from neighboring datasets, and then rely on the contraction/nonexpansive property of the data-driven mapping $\hat\Phi(\cdot, \mc S_N)$\cite{hardt2016train,chen2018stability,charles2018stability,mou2018generalization,farnia2021train,ho2025instability,fabiani2026finite,fabiani2026certifying}. 
Although effective, such arguments do not explicitly capture how the data perturbation is attenuated or amplified by the algorithmic dynamics over iterations, thereby obscuring the inherent \gls{IO} structure underlying the algorithmic stability mechanism.

This is precisely the perspective we offer. By focusing on a broad class of discrete-time learning dynamics driven by a data-dependent operator $\hat\Phi(\cdot, \mc S_N)$, we compare the system trajectories corresponding to two neighboring datasets $\mc S_N$ and $\mc S_N^j$, obtaining a \emph{data sensitivity system} whose induced operator discrepancy
can be decomposed into two components. The first consists of the sensitivity propagated through $\hat\Phi(\cdot, \mc S_N)$ itself, while the second one is the extra sensitivity injected due to sample replacement. Acting as an exogenous disturbance channel, the latter enables our interpretation of algorithmic stability as a robust performance property from this data perturbation channel to the loss function.
Our analysis hence relies on dissipativity arguments, where the available information on the regularity of $\hat\Phi(\cdot, \mc S_N)$ is encoded through an \gls{IQC} \cite{yakubovich1971procedure}, while the effect of replacing one sample is captured by a uniform data-injection bound. A storage function then certifies a finite gain from the data perturbation to the terminal (i.e., after $T$ steps) state sensitivity. Capitalizing on this, we develop an optimization-based procedure to quantify the algorithmic stability bound, and hence generalization capabilities, of the learning-enabled system at hand. In particular, the (uniform) stability coefficient decomposes in two main terms:
i) the one-sample sensitivity of the data-dependent operator, which scales as $1/N$ in several relevant settings, and ii) an algorithm-dependent dynamical sensitivity gain certified by a dissipativity inequality. Such a decomposition is our main conceptual message: generalization depends on both how strongly a single sample perturbs $\hat\Phi(\cdot, \mc S_N)$, and how the closed-loop dynamics dissipates or amplifies the injected sensitivity.

\subsection{Related work}
To the best of our knowledge, only a few recent works have considered algorithmic stability under a system-theoretic lens. 

Of particular relevance, \cite{attia2021algorithmic} focused on the stability of \gls{NAG} method, 
and established a sharp distinction between quadratic and general smooth convex costs. 
As a main conclusion, it turned out 
that acceleration may strongly amplify small perturbations and that convergence guarantees alone do not determine the algorithmic stability of an optimization dynamics. The present work, instead, offers a complementary vision by formulating the propagation of data perturbations as an \gls{IO} property of a general sensitivity system. In our framework, the amplification phenomena identified in \cite{attia2021algorithmic} are naturally reflected in the dynamical gain from the data injection channel to the terminal hypothesis.

In \cite{kozachkov2023generalization} it was shown, instead, that a contractive iterative scheme in a suitable Riemannian metric is uniformly stable, obtaining bounds proportional to the inverse of both the contraction rate and the number of samples. Although similar in the spirit, our approach differs in the robustness property being certified. In fact, \cite{kozachkov2023generalization} derived stability from the contraction of neighboring trajectories, while in this paper sample replacement is explicitly represented as an exogenous input, and uniform stability is obtained from a dissipativity certificate bounding the resulting induced effect. This allows one to separate the sensitivity of $\hat\Phi(\cdot, \mc S_N)$ to the dataset from the ability of the dynamics to dissipate or amplify the resulting perturbation.

In \cite{li2026unified} the authors studied uniform stability of first-order accelerated schemes in the smooth, strongly convex quadratic regime. Here, accelerated methods were represented as Lur'e-type feedback interconnections between a linear dynamical system and a gradient oracle, encoded smoothness and strong convexity through sector \glspl{IQC}, and searched for quadratic Lyapunov functions via \glspl{LMI}. 
Besides adopting a similar state-space and \gls{IQC} machinery, our contribution starts from a different decomposition of the data sensitivity dynamics, including the disturbance channel caused by the sample replacement, which is not absorbed into an optimization-specific recursion. This also leads to a broader scope. Unlike \cite{li2026unified}, our analysis is indeed not limited to first-order optimization, since $\hat\Phi(\cdot, \mc S_N)$ may correspond to the empirical gradient or a data-driven monotone operator evaluated along the algorithmic trajectory or, more generally, an operator learned from data before the dynamics is deployed, for instance through \gls{LS} or \gls{KRR}.

\subsection{Summary of contributions and paper organization}
Our framework allows one to establish a modular connection between statistical generalization and robust control, as it essentially shows that a learning-based dynamics generalizes when the sensitivity injected by a one-sample replacement is sufficiently dissipated by the closed-loop interconnection. 

In summary, this paper makes the following contribution:
\begin{itemize}
    \item[i)] 
    We recast algorithmic stability of learning-enabled dynamics
	as an \gls{IO} robustness property. Specifically, we show that uniform stability can be interpreted as a bounded-gain property from the exogenous disturbance generated by sample replacement to the loss discrepancy;

    \item[ii)]
    We derive a dissipativity condition, in the form of a \gls{MI}, that allows one to certify such a bounded-gain property.
	This yields a uniform stability bound that
	separates the statistical sensitivity of $\hat\Phi(\cdot, \mc S_N)$ from the amplification or dissipation induced by the dynamics;

    \item[iii)]
    We formulate the computation of the dynamical sensitivity gain as a \gls{SDP} complemented by a scalar search, thereby providing a systematic way to optimize and compare the stability property, and hence generalization, of different dynamics;

	\item[iv)]
	We show that our framework recovers standard contractivity-based stability mechanisms for \gls{GD}, while also applying naturally to lifted methods with memory, such as \gls{HB} and \gls{NAG}, for which contractivity may fail. Interestingly, numerical experiments show that nominal convergence and data sensitivity are distinct design objectives.
\end{itemize}
In addition, we show that several mappings $\hat\Phi(\cdot, \mc S_N)$ of practical interest are characterized by one-sample perturbation scaling as $1/N$, allowing one to obtain consistent, exponential, high-probability bounds. Finally, we emphasize how our results apply to a broad class of data-dependent operators $\hat\Phi(\cdot, \mc S_N)$, including those learned offline via \gls{LS} \cite{fabiani2026concentration} or \gls{KRR} \cite{care2023kernel}, and illustrate them on a kernel-based residual-controller example.

The rest of the paper is organized as follows: in \S \ref{sec:problem_formulation} we formalize the problem addressed, provide basic notions of algorithmic stability and a system-theoretic connection. In \S \ref{sec:sensitivity_approach} we develop a dissipativity-based approach to optimize uniform stability, which can be interpreted as a robust control problem, and discuss the computational aspects. In \S \ref{sec:data_based_algorithms} and \ref{sec:learning_based_dyn}, instead, we apply our technique to characterize the generalization properties of popular data-based optimization and learning-based dynamics, also performing numerical experiments. Finally, in Appendix~\ref{app:1overN} are reported examples of common operators characterized by one-sample perturbation scaling as $1/N$.


\section{Problem formulation and preliminaries}\label{sec:problem_formulation}
Throughout this paper, we will consider the following control-affine, nonlinear, discrete-time dynamics:
\begin{equation}\label{eq:dynamics}
	\left\{
	\begin{aligned}
		x_{k+1}&=Ax_k+B\hat\Phi(y_k,\mc S_N),\\
		y_k&=C x_k + D \hat\Phi(y_k,\mc S_N),
	\end{aligned}
	\right.
\end{equation}
where the vector of state variables $x\in\R^n$ evolves according to an autonomous term through $A\in\R^{n\times n}$, and a deterministic mapping $\hat\Phi:\R^p\times\Xi^N\to\R^m$, premultiplied by $B\in\R^{n\times m}$. The output equation determining the measurement vector $y\in\R^p$, instead, is obtained with matrices $C\in\R^{p\times n}$ and $D\in\R^{p\times m}$.

Given a training set $\mc S_N\eqdef\{\xi^{(i)}\}_{i=1}^N$ consisting of $N$ \gls{iid} samples $\xi^{(i)}\in\Xi\subseteq\R^d$ with \emph{unknown} statistics $\prob{P}$, the operator $\hat\Phi$ either directly manipulates these data as in, e.g., \gls{SAA} and kernel-based techniques, or amounts to a data-driven map learned offline via, e.g., \gls{LS} or \gls{KRR}. 

For the well-posedness of the implicit interconnection occurring in the output of \eqref{eq:dynamics}, we will assume the following:

\begin{standing}\label{standing:system_dynamics}
	For every $N\ge1$, $\mc S_N\in\Xi^N$ and $x\in\R^n$, $y=C x+D \hat{\Phi}\left(y, S_N\right)$ admits a unique solution.
	\hfill$\square$
\end{standing}

While trivially satisfied when $D=0$, also verifying that $y\mapsto Cx+D\hat\Phi(y,\mathcal S_N)$
is a contraction uniformly in $x$ allows one to meet the requirement in Standing Assumption~\ref{standing:system_dynamics}.

Note that the intrinsic dependency of $\hat\Phi$ on $\mc S_N$ may produce substantially different evolutions for the data-enabled dynamics in \eqref{eq:dynamics}. We will thus explore the \emph{generalization} properties possessed by the underlying dynamics, and establish rigorous certificates on the sensitivity of the resulting trajectories \gls{wrt} changes in the training set $\mc S_N$.
Specifically, we will analyze algorithmic stability as a possible means for generalization, and develop suitable connections bridging this concept with standard system-theoretic dissipativity notions. 

\subsection{Preliminaries on algorithmic stability}
We define a \emph{learning algorithm} as a mapping $\mc A_N :\R^n\times\Xi^N \to \R^n$, taking some $N$-multisample and returning a hypothesis $H_N \eqdef \mc A_N(\mc S_N) \in \R^n$. 
By identifying $\mc A_N(\mc S_N)$ with the dynamics in \eqref{eq:dynamics} observed for $T$ iterations starting from some $x_0$, we simply have $H_N=x_T$.
We call $\mc S_N^{j}$ \emph{neighboring} set, as it is obtained by \emph{replacing} the $j$-th element of $\mc S_N$, i.e., $\mc S_N^{j} = \{\xi^{(1)}, \ldots, \xi^{(j-1)}, \xi', \xi^{(j+1)}, \ldots, \xi^{(N)}\}$, where $\xi'\in\Xi$ is drawn according to $\prob{P}$, \gls{iid} \gls{wrt} $\mc S_N\setminus\{\xi^{(j)}\}$. In this case, $H_{N^{j}} \eqdef \mc A_{N}(\mc S_N^{j})$, which in respect to \eqref{eq:dynamics} yields $H_{N^{j}}=x_T'$. 

The accuracy of the algorithm prediction (i.e., the hypothesis produced) is generally measured according to some \emph{loss function} $\ell:\R^n\times\Xi\to\R_{\ge 0}$.
The \emph{generalization error} is typically attached to the loss, and coincides with the following random variable (the hypothesis $x_T$ indeed depends on $\mc S_N$):
\begin{equation}\label{eq:risk}
	r(\mc A_N)=\expected{\prob{P}}{\ell(H_N, \xi)}.
\end{equation}
Note that computing \eqref{eq:risk} would require $\prob{P}$, which is unavailable in the considered framework. Nevertheless, the simplest estimator for \eqref{eq:risk} amounts to the \emph{empirical error}, defined as:
\begin{equation}\label{eq:empirical_risk}
	\hat r(\mc A_N, \mc S_N)=\frac{1}{N} \sum_{j=1}^{N} \ell(H_N, \xi^{(j)}).
\end{equation}

\begin{definition}[\textup{\hspace{-.02cm}\cite[Def.~6]{bousquet2002stability} Uniform stability}]\label{def:unif_stab}
	An algorithm $\mc A_N$ has \emph{uniform stability} $\eta\ge0$ \gls{wrt} the loss function $\ell$ if for all $N\ge1$, $\mc S_N\in\Xi^N$, and $j\in\{1,\ldots,N\}$, 
	\[
		\underset{\xi\in\Xi}{\textnormal{\textrm{sup}}}~|\ell(H_N,\xi)-\ell(H_{N^j},\xi)| \le \eta.
	\]
	\hfill$\square$
\end{definition}

In words, an algorithm is said to be uniformly stable if changing one sample $\xi^{(j)}$ in the training dataset does not substantially alter the output \gls{wrt} $\ell$. In general, the smaller the $\eta$ the better the stability features possessed by the algorithm at hand.
The following result links the uniform stability and the generalization properties of $\mc A_N$, by providing a probabilistic bound on the \emph{generalization gap}, $r(\mc A_N)-\hat r(\mc A_N,\mc S_N)$:
\begin{lemma}[\textup{\hspace{-.02cm}\cite[Th.~3.2]{kutin2002almost}}]\label{lemma:exp_bound}
	Let $\mc A_N$ be $\eta$-uniformly stable \gls{wrt} a loss function $0\le\ell(H_N,\xi)\le M$, $M\ge0$, for all $\xi\in\Xi$ and $\mc S_N\in\Xi^N$. Then, for any $N\ge1$ and $\delta\in(0,1)$, it holds that:
	\ifTwoColumn
		\[
		\begin{aligned}
		\prob{P}^{N}&\left\{\mc S_N\in\Xi^N:r(\mc A_N)-\hat r(\mc A_N,\mc S_N)\right.\\
		&\hspace{2cm}\left.\le\eta+
		(N\eta+M)\sqrt{2\ln(1/\delta)/N} \right\}\ge1-\delta.
		\end{aligned}
		\]
	\else
		\[
		\prob{P}^{N}\left\{\mc S_N\in\Xi^N:r(\mc A_N)-\hat r(\mc A_N,\mc S_N)\le\eta+(N\eta+M)\sqrt{2\ln(1/\delta)/N} \right\}\ge1-\delta.
		\]
	\fi
	\hfill$\square$
\end{lemma}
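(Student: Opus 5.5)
The plan is the standard Bousquet--Elisseeff argument: bound the expected generalization gap via a replace-one symmetrization, then apply McDiarmid's bounded-differences inequality \cite{mcdiarmid1989method} to the random variable $g(\mc S_N)\eqdef r(\mc A_N)-\hat r(\mc A_N,\mc S_N)$.

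\textbf{Expectation.} Let $\xi'$ be an independent copy of the samples. Exchangeability of $\xi^{(j)}$ and $\xi'$ gives
\[
\expected{\mc S_N}{\hat r(\mc A_N,\mc S_N)}=\frac1N\sum_{j=1}^N\expected{\mc S_N,\xi'}{\ell(H_{N^j},\xi')}.
\]
On the other hand, $\expected{\mc S_N}{r(\mc A_N)}=\expected{\mc S_N,\xi'}{\ell(H_N,\xi')}$. Subtracting the two identities and applying Definition~\ref{def:unif_stab} term by term yields $\expected{}{g}\le\eta$.

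\textbf{Bounded differences.} Replace $\xi^{(j)}$ by $\xi'$ and bound the resulting change in $g$.
\begin{itemize}
\item For the true risk, Definition~\ref{def:unif_stab} holds pointwise in $\xi$, so $|r(\mc A_N)-r(\mc A_{N^j})|\le\eta$.
\item For the empirical risk, the $N-1$ unchanged terms each move by at most $\eta/N$. The single replaced term moves by at most $M/N$, using $0\le\ell\le M$.
\end{itemize}
Hence $|g(\mc S_N)-g(\mc S_N^j)|\le c\eqdef 2\eta+M/N$.

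\textbf{Concentration.} McDiarmid's inequality gives
\[
\prob{P}^N\{g-\expected{}{g}\ge\epsilon\}\le\exp\!\left(-2\epsilon^2/(Nc^2)\right).
\]
Setting the right-hand side equal to $\delta$ gives
\[
\epsilon=c\sqrt{N\ln(1/\delta)/2}=(2N\eta+M)\sqrt{\ln(1/\delta)/(2N)}=(N\eta+M/2)\sqrt{2\ln(1/\delta)/N},
\]
which is at most $(N\eta+M)\sqrt{2\ln(1/\delta)/N}$. Combining with $\expected{}{g}\le\eta$ proves the statement. This is in fact slightly sharper than claimed, which is consistent with the lemma being quoted from \cite{kutin2002almost}.

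\textbf{Main obstacle.} The only delicate step is the expectation bound: the renaming of $\xi^{(j)}$ and $\xi'$ must be justified by the i.i.d.\ assumption and by measurability of $\ell(H_N,\cdot)$. Everything after that is routine bookkeeping of constants.
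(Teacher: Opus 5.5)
Your proof is correct. The paper gives no proof of its own: it imports the result from \cite{kutin2002almost} and only notes that uniform stability supplies the bounded-difference condition for McDiarmid's inequality. Your argument (replace-one symmetrization to get $\mathbb{E}[g]\le\eta$, then one-sided McDiarmid with constant $c=2\eta+M/N$) is the standard Bousquet--Elisseeff route behind that citation.

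Your constant $M/2$ is sharper because you use the range $[0,M]$ of $\ell$ to bound the replaced term by $M/N$. The stated factor $(N\eta+M)$ corresponds to the cruder constant $2(\eta+M/N)$, so your bound implies the lemma as written.
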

Specifically, Lemma~\ref{lemma:exp_bound} provides with an upper bound on the risk associated to hypothesis $H_N=\mc A_N(\mc S_N)$, which holds with arbitrary confidence $1-\delta$, such that i) if $\eta$ scales proportional to $1/N$, vanishes as $N\to\infty$, and ii) it is exponential in the confidence parameter $\delta$. The former condition is  called \emph{consistency} property, and provides a ``tight'' certificate in the sense that the upper bound in Lemma~\ref{lemma:exp_bound} vanishes as $N\to\infty$.

\subsection{An incremental \gls{IO} interpretation of algorithmic stability}
Algorithmic stability is thus an attractive property, as it allows one to directly verify the finite difference condition required to apply the McDiarmid's concentration inequality \cite{mcdiarmid1989method} and, consequently, to derive generalization bounds. However, the application of this notion to the data-enabled dynamics in \eqref{eq:dynamics}, and that obtained with $\mc S_N^j$ in place of $\mc S_N$, turns into:
\[
	\underset{\xi\in\Xi}{\textrm{sup}}~|\ell(x_T,\xi)-\ell(x_T',\xi)| \le \eta,	
\]
which basically concentrates on a ``snapshot'' of the last step of the resulting $T$-long trajectories, passed through $\ell$. From a system-theoretic perspective, this has limited interpretation, since the available information on the transient, which depends on how sensitive is the learned mapping $\hat\Phi$ to data is wasted. 

With this regard, we note that \gls{iIOS} \cite{sontag1999notions
} is the system-theoretic stability notion closer in the spirit to Definition~\ref{def:unif_stab}. 
In a generic nonlinear framework, one then considers the following \gls{IO} dynamics:
\begin{equation}\label{eq:IOdynamics}
	\left\{
		\begin{aligned}
			x_{k+1}&=\Psi(x_k,u_k),\\
			y_k&=h(x_k),
		\end{aligned}
	\right.
\end{equation}
with $\Psi:\R^n\times\R^m\to\R^n$ and $h:\R^n\to\R^p$. 

\begin{definition}
	The system in \eqref{eq:IOdynamics} is \gls{iIOS} if there exist class-$\mathcal{KL}$ function $\theta$ and class-$\mathcal{K}_{\infty}$ function $\gamma$\footnote{See \cite{kellett2014compendium} for standard definitions of class-$\mathcal{KL}$ and $\mathcal{K}_{\infty}$ functions.} such that:
	\[
		\norm{y_k-y'_k}\le\theta(\norm{x_0-x'_0},k)+\gamma\left(\underset{0\le i\le k}{\mathrm{sup}}~\norm{u_i-u'_i}\right),
	\]
	at every $k\ge0$, where the output trajectories are generated by \eqref{eq:IOdynamics} for any $x_0$, $x'_0$ and input sequences $u$, $u'$. 
	\hfill$\square$
\end{definition}

\begin{proposition}\label{prop:iIOS_uniformstab}
	For any $y\in\R^p$, $N\ge1$, $\mc S_N\in\Xi^N$ and $j\in\{1,\ldots,N\}$, let $\|\hat\Phi(y,\mc S^j_N)-\hat\Phi(y,\mc S_N)\|$ be bounded. If $D=0$ and \eqref{eq:dynamics} is \gls{iIOS} \gls{wrt} the loss-output map $z=h(x)\eqdef\ell(x,\cdot)$, then it is also uniformly stable.
	\hfill$\square$
\end{proposition}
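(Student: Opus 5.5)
We recast the two runs of \eqref{eq:dynamics}, one driven by $\mc S_N$ and one by $\mc S_N^j$, as two trajectories of a single system of the form \eqref{eq:IOdynamics} that differ only in their inputs. Since $D=0$, the measurement $y_k=Cx_k$ is an explicit function of the state, so the implicit loop in Standing Assumption~\ref{standing:system_dynamics} plays no role. We then add and subtract the nominal operator:
\[
x'_{k+1}=Ax'_k+B\hat\Phi(Cx'_k,\mc S_N)+Bu'_k,\qquad u'_k\eqdef\hat\Phi(Cx'_k,\mc S_N^j)-\hat\Phi(Cx'_k,\mc S_N).
\]
The nominal trajectory is written in the same way with $u_k\equiv 0$. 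Both trajectories are therefore generated by $\Psi(x,u)\eqdef Ax+B\hat\Phi(Cx,\mc S_N)+Bu$ from the same initial condition $x_0=x'_0$. The output is the loss map $h(x)=\ell(x,\cdot)$, regarded as a function of $\xi$; its norm is the sup norm over $\Xi$, which is consistent with Definition~\ref{def:unif_stab}.

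Next we invoke \gls{iIOS} of this system with the two input sequences $u\equiv0$ and $u'$. Because $x_0=x'_0$, the class-$\mathcal{KL}$ term vanishes, since $\theta(0,k)=0$. At $k=T$ we obtain
\[
\sup_{\xi\in\Xi}|\ell(x_T,\xi)-\ell(x'_T,\xi)|\le\gamma\Big(\sup_{0\le i\le T}\|u'_i\|\Big).
\]
By hypothesis, $\|\hat\Phi(y,\mc S_N^j)-\hat\Phi(y,\mc S_N)\|\le\beta$ for some bound $\beta$ that holds for all $y$. In particular it holds along $y=Cx'_i$, so $\sup_i\|u'_i\|\le\beta$. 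Since $\gamma$ is increasing, the loss discrepancy is at most $\eta\eqdef\gamma(\beta)$, which is exactly the uniform stability bound in Definition~\ref{def:unif_stab}.

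The main obstacle is uniformity. Definition~\ref{def:unif_stab} requires one $\eta$ valid for all $N$, all $\mc S_N$ and all $j$, whereas the system $\Psi$ depends on $\mc S_N$, so a priori the comparison functions $\theta,\gamma$ depend on the dataset too. I would read the \gls{iIOS} hypothesis as holding with $\theta,\gamma$ uniform over datasets. I would likewise read the boundedness hypothesis as a uniform bound $\beta$, independent of $y$, $\mc S_N$ and $j$. If that bound depends on $N$, the same argument still gives $\eta_N=\gamma(\beta_N)$; this anticipates the $1/N$ scaling discussed later whenever $\gamma$ is linear.

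A secondary point is that \gls{iIOS} is stated for arbitrary input sequences. The perturbation $u'$ depends on the state trajectory, but it is nonetheless a well-defined sequence once that trajectory is fixed, so the estimate applies along it. This is the reason for treating the data discrepancy as an exogenous input rather than as part of the dynamics.
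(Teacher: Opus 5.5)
Your proposal is correct and follows the paper's proof essentially step by step. With $D=0$ you write both runs through the single map $x\mapsto Ax+B\hat\Phi(Cx,\mc S_N)+Bu$, treat $u'_k=\hat\Phi(Cx'_k,\mc S_N^j)-\hat\Phi(Cx'_k,\mc S_N)$ as the input discrepancy against $u\equiv0$, use the sup-norm loss output, let the common initial condition remove the $\mathcal{KL}$ term, and bound the result by the $\mathcal K_\infty$ gain of the bounded discrepancy. Your remark that $\theta,\gamma$ and the discrepancy bound must hold uniformly over $y$, $\mc S_N$ and $j$ makes explicit an assumption the paper leaves implicit.
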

\begin{proof}
	The control-affine nonlinear dynamics in \eqref{eq:dynamics} with $D=0$ can be equivalently absorbed by $\hat\Psi(x_k,\mc S_N)+Bu_k\eqdef Ax_k+B(\hat\Phi(Cx_k,\mc S_N)+u_k)$ with $u_k=0$ for all $k\ge0$, which yields \eqref{eq:IOdynamics} once redefined the output equation as $z_k\eqdef h(x_k)$, with $h:\R^n\to\mc Y$ denoting the loss-output mapping, formally defined as $h(x)\eqdef\ell(x,\cdot)$. Specifically, $\mc Y$ is the space of bounded real-valued functions on $\Xi$, equipped with supremum norm, namely $\norm{h(x)}_{\mc Y}\eqdef\textrm{sup}_{\xi\in\Xi}~|\ell(x,\xi)|$.
	
	Then, pick $j\in\{1,\ldots,N\}$, and consider the following dynamics compatible with $\hat\Psi(x_k,\mc S_N)+Bu_k$ and $\mc S_N^j$:
\[
	\left\{
		\begin{aligned}
			x'_{k+1}&=\hat\Psi(x'_k,\mc S^j_N)=\hat\Psi(x'_k,\mc S_N)+Bu'_k,\\
			z'_k&=h(x'_k),
		\end{aligned}
	\right.
\]
initialized at some $x'_0=x_0$, with $u'_k\eqdef\hat\Phi(Cx'_k,\mc S^j_N)-\hat\Phi(Cx'_k,\mc S_N)$. Notice that, by definition, $\norm{h(x)-h(x')}_{\mc Y}=\textrm{sup}_{\xi\in\Xi}~|\ell(x,\xi)-\ell(x',\xi)|$.  Then, if \eqref{eq:dynamics} is \gls{iIOS} with the redefined output equation, one can readily conclude that it is also uniformly stable, since at the $T$-th iteration, $H_N=\mc A_N(\mc S_N)=x_T$, thereby obtaining:
\[
\begin{aligned}
	\norm{z_T-z'_T}_{\mc Y}&=\norm{h(x_T)-h(x'_T)}_{\mc Y}=\underset{\xi\in\Xi}{\textrm{sup}}~|\ell(x_T,\xi)-\ell(x'_T,\xi)|\\
	&\le\gamma\left(\underset{0\le k\le T}{\textrm{sup}}~\|B(\hat\Phi(Cx'_k,\mc S^j_N)-\hat\Phi(Cx'_k,\mc S_N))\|\right),
\end{aligned}
\]
which, in view of the properties of the function of class $\mc K_\infty$, yields a finite bound on the loss discrepancy in case $\|\Phi(Cx'_k,\mc S^j_N)-\hat\Phi(Cx'_k,\mc S_N)\|$ is bounded.
\end{proof}

While \gls{iIOS} represents a condition that is difficult to verify in practice, we observe that the resulting stability coefficient in Proposition~\ref{prop:iIOS_uniformstab} captures the data perturbation effect producing an exogenous disturbance, $u'_k=\hat\Psi(x'_k,\mc S^j_N)-\hat\Psi(x'_k,\mc S_N)=\hat\Phi(Cx'_k,\mc S^j_N)-\hat\Phi(Cx'_k,\mc S_N)$, through the evolution of two compatible trajectories.  The combination of these insights motivates us to resort to a more tractable dissipativity-based approach, which will allow us to directly control the sensitivity of the dynamics in \eqref{eq:dynamics} \gls{wrt} data perturbations, and hence the uniform stability of the resulting learning-enabled dynamics.

\section{Dissipativity-based certification of algorithmic stability}\label{sec:sensitivity_approach}

Inspired by the previous discussion, we will explore next how algorithmic stability shall not be considered just a static property of learning-enabled dynamics, but rather the output of a \emph{data sensitivity system}. We will make this interpretation explicit by treating the effect of replacing one training sample as an exogenous disturbance acting on a sensitivity system, and show that the resulting stability coefficient decomposes into a data-injection term and a dynamical sensitivity gain. 

\subsection{Sensitivity dynamics and uniform stability certificates}\label{subsec:main_results}
Given neighboring datasets $\mc S_N$ and $\mc S_N^j$, let us then consider the two trajectories $(x_k,y_k)$ and $(x_k',y_k')$ generated by \eqref{eq:dynamics} and
\[
	\left\{
	\begin{aligned}
	x'_{k+1}&=Ax'_k+B\hat\Phi(y'_k, \mc S_N^j),\\
	y'_k&=Cx'_k+D\hat\Phi(y'_k, \mc S_N^j),
	\end{aligned}
	\right.
\]
respectively, with the same initial condition $x_0=x'_0$. By defining $\Delta x_k \eqdef x_k-x'_k$ and $\Delta y_k \eqdef y_k-y'_k$, one can immediately decompose the input difference as follows:
\[
\begin{aligned}
	\Delta u_k
	&\eqdef\hat\Phi(y_k, \mc S_N)-\hat\Phi(y'_k, \mc S_N^j)\\
	&=\underbrace{\hat\Phi(y_k, \mc S_N)-\hat\Phi(y'_k, \mc S_N)}_{\reqdef \nu_k}
	+\underbrace{\hat\Phi(y'_k, \mc S_N)-\hat\Phi(y'_k, \mc S_N^j)}_{\reqdef w_k}.
\end{aligned}
\]
Here, $\nu_k$ is the sensitivity propagated through the same data-based mapping, whereas
$w_k$ is the new sensitivity injected by replacing one sample. Hence, the sensitivity
system reads as:
\begin{equation}\label{eq:sensitivity-system}
	\left\{
	\begin{aligned}
	\Delta x_{k+1}&=A\Delta x_k+B(\nu_k+w_k),\\
	\Delta y_k&=C\Delta x_k+D(\nu_k+w_k).
	\end{aligned}
	\right.
\end{equation}

Next, we make suitable assumptions on the mapping $\hat\Phi$:
\begin{standing}\label{standing:learning_mapping}
	For any $N\ge1$ and $\mc S_N\in\Xi^N$, the following conditions hold true:
	\begin{enumerate}
		\item[\normaltext{(i)}] $y\mapsto\hat\Phi(y,\mc S_N)$ satisfies an incremental \gls{IQC}, i.e., there exists a multiplier $\Pi\in\bbS^{m+p}$ such that
		\begin{equation}\label{eq:oracle-iqc}
			\begin{bmatrix}
				\Delta y\\
				\nu
			\end{bmatrix}^\top \Pi \begin{bmatrix}
				\Delta y\\
				\nu
			\end{bmatrix} \le 0,
		\end{equation}
		for all admissible \gls{IO} pairs satisfying $\Delta y=y-y'\in\R^p$, $\nu=\hat\Phi(y,\mc S_N)-\hat\Phi(y',\mc S_N)\in\R^m$;
		\item[\normaltext{(ii)}] $\exists \zeta_N>0$ s.t., for any $j\in\{1,\ldots,N\}$ and $y\in\R^p$, $\|\hat\Phi(y,\mc S_N)-\hat\Phi(y,\mc S^j_N)\|\le \zeta_N$.
	\end{enumerate}
	\hfill$\square$
\end{standing}

The first requirement is a standard assumption in dissipativity-based analysis \cite{lessard2016analysis,lessard2022analysis} that will be crucial to let the sensitivity dynamics be dissipative \gls{wrt} the supply rate defined by $\Pi$, a key factor to control the algorithmic stability. The term $w_k$, instead, is merely treated as an exogenous data perturbation channel. The second condition depends on how the mapping $\hat\Phi$ is obtained, or manipulates the samples directly. Remarkably, $\zeta_N$ happens to be proportional to $1/N$ in several relevant cases, making the generalization bound in Lemma~\ref{lemma:exp_bound} tight---see Appendix~\ref{app:1overN} for suitable examples and main derivations. We note that the proofs below also apply to the local case in which the neighboring trajectories evolve within some local set $\mathcal Y\subseteq\mathbb R^p$, and the developed bounds will hold true with:
\[
\zeta_N=\zeta_N(\mathcal Y)
\eqdef
\underset{j\in\{1,\ldots,N\}}{\textrm{sup}}
\underset{y\in\mathcal Y}{\textrm{sup}}~\|\hat\Phi(y,\mathcal S_N)-\hat\Phi(y,\mathcal S_N^j)\|.
\]

\begin{standing}\label{standing:loss_function}
	The loss function $H_N\mapsto\ell(H_N,\xi)$ is $\kappa$-Lipschitz continuous and so that $0\le\ell(H_N,\xi)\le M$.
	\hfill$\square$
\end{standing}

Standing Assumption~\ref{standing:loss_function} holds, for instance, for absolute, Huber-type or clipped losses, and squared loss functions restricted to compact subsets of the hypothesis space, $\mc X\subset\R^n$. 

\begin{theorem} \label{thm:induced-gain}
Let $(\Delta x_k,\Delta y_k,\nu_k,w_k)$
satisfy the sensitivity dynamics \eqref{eq:sensitivity-system} and the \gls{IQC} \eqref{eq:oracle-iqc}. Define
\[
\mathscr A \eqdef \begin{bmatrix} A & B & B \end{bmatrix},\qquad
\mathscr E \eqdef
\begin{bmatrix}
C & D & D\\
0 & I & 0
\end{bmatrix}.
\]
Suppose there exist $P\succ0$, $\rho$, $\gamma_1$, $\gamma_2\ge0$ such that
\ifTwoColumn
	\begin{align}\label{eq:main-lmi-induced-gain}
		&\mathscr M(P,\rho,\gamma_1,\gamma_2)\eqdef\nonumber\\
		&\mathscr A^\top P\mathscr A
		-
		\diag(\rho^2P,0,0)
		-
		\gamma_1\mathscr E^\top\Pi\mathscr E
		-
		\diag(0,0,\gamma_2 I)
		\preccurlyeq0,
	\end{align}
\else
	\begin{equation}\label{eq:main-lmi-induced-gain}
	\mathscr M(P,\rho,\gamma_1,\gamma_2)\eqdef
	\mathscr A^\top P\mathscr A
	-
	\diag(\rho^2P,0,0)
	-
	\gamma_1\mathscr E^\top\Pi\mathscr E
	-
	\diag(0,0,\gamma_2 I)
	\preccurlyeq0,
\end{equation}
\fi
Then, for every $T\ge1$,
\begin{equation}\label{eq:induced-gain-bound}
	\|\Delta x_T\|
	\le
	\sqrt{\frac{\gamma_2}{\lambda_{\normaltext{\textrm{min}}}(P)}}
	\left(
	\sum_{k=0}^{T-1}\rho^{2(T-k-1)}\|w_k\|^2
	\right)^{1/2}.
\end{equation}
\hfill$\square$
\end{theorem}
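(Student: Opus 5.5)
The plan is a standard dissipativity argument with storage function $V(\Delta x)\eqdef\Delta x^\top P\Delta x$. Stack the signals into $\chi_k\eqdef\col(\Delta x_k,\nu_k,w_k)$. By \eqref{eq:sensitivity-system}, $\Delta x_{k+1}=\mathscr A\chi_k$ and $\col(\Delta y_k,\nu_k)=\mathscr E\chi_k$.

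Pre- and post-multiplying \eqref{eq:main-lmi-induced-gain} by $\chi_k^\top$ and $\chi_k$ should give the one-step inequality
\[
V(\Delta x_{k+1})-\rho^2V(\Delta x_k)-\gamma_1\begin{bmatrix}\Delta y_k\\ \nu_k\end{bmatrix}^\top\Pi\begin{bmatrix}\Delta y_k\\ \nu_k\end{bmatrix}-\gamma_2\|w_k\|^2\le 0 .
\]
Standing Assumption~\ref{standing:learning_mapping}(i) gives \eqref{eq:oracle-iqc}, and $\gamma_1\ge0$, so the quadratic form term is nonnegative and can be dropped. This leaves
\[
V(\Delta x_{k+1})\le\rho^2V(\Delta x_k)+\gamma_2\|w_k\|^2 .
\]
One point must be checked here: \eqref{eq:oracle-iqc} has to apply to the pair $(\Delta y_k,\nu_k)$. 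It does, because $\nu_k=\hat\Phi(y_k,\mc S_N)-\hat\Phi(y'_k,\mc S_N)$ and $\Delta y_k=y_k-y'_k$ are both built from the same dataset $\mc S_N$. The $w_k$ channel is the only place $\mc S_N^j$ enters. Well-posedness of $y_k$ and $y'_k$ follows from Standing Assumption~\ref{standing:system_dynamics}.

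Next I unroll this scalar linear recursion from $k=0$ to $T-1$. Since $x_0=x_0'$, we have $\Delta x_0=0$ and $V(\Delta x_0)=0$. Induction then gives
\[
V(\Delta x_T)\le\gamma_2\sum_{k=0}^{T-1}\rho^{2(T-k-1)}\|w_k\|^2 .
\]
Finally, the lower bound $V(\Delta x_T)\ge\lambda_{\min}(P)\|\Delta x_T\|^2$, with $\lambda_{\min}(P)>0$ because $P\succ0$, yields \eqref{eq:induced-gain-bound} after dividing and taking square roots.

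The only delicate step is the first one. The quadratic form of $\mathscr M$ must split exactly into the storage increment, the IQC term and the disturbance term. This requires $\mathscr A^\top P\mathscr A$ to evaluate to $\Delta x_{k+1}^\top P\Delta x_{k+1}$, and $\diag(\rho^2P,0,0)$ to pick out only $\rho^2\Delta x_k^\top P\Delta x_k$. The second row of $\mathscr E$ extracts $\nu_k$ and not $\nu_k+w_k$, so it is consistent with the IQC being stated on $\nu$ only. Everything else is routine.
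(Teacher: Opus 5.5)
Your proposal is correct and follows essentially the same route as the paper: the storage function $V(\Delta x)=\Delta x^\top P\Delta x$, evaluation of $\mathscr M$ at $\col(\Delta x_k,\nu_k,w_k)$ to get the pointwise dissipation inequality, dropping the IQC term using $\gamma_1\ge0$, telescoping from $\Delta x_0=0$, and finally $V(\Delta x_T)\ge\lambda_{\min}(P)\|\Delta x_T\|^2$. You also check explicitly that \eqref{eq:oracle-iqc} applies to $(\Delta y_k,\nu_k)$ because both signals are built from $\mc S_N$ alone, a point the paper leaves implicit.
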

\begin{proof}
	Given a set of samples $\mc S_N$, Standing Assumption~\ref{standing:system_dynamics} guarantees that the dynamics in \eqref{eq:dynamics} is well-posed, and hence for any $x_0$ and $\mc S_N$ there exists a unique trajectory $\{x_k\}_{k=0}^{T}$. The same argument can be likewise applied to $\mc S^j_N$ to eventually establish the well-posedness of the sensitivity dynamics in \eqref{eq:sensitivity-system}.

	Then, let us define a storage function $V(\Delta x)\eqdef\Delta x^\top P\Delta x$, along with $s_k\eqdef\operatorname{col}(\Delta x_k,\nu_k,w_k)$. With matrices $\mathscr A$ and $\mathscr E$ defined in the main statement, one readily obtains:
	\[
		\Delta x_{k+1}=\mathscr A s_k,
		\qquad
		\begin{bmatrix}\Delta y_k\\ \nu_k\end{bmatrix}=\mathscr E s_k.
	\]
	With the introduced notation, it is thus immediate to verify that the \gls{MI} in \eqref{eq:main-lmi-induced-gain} is equivalent to the pointwise dissipativity inequality:
	\[
		V(\Delta x_{k+1})-\rho^2V(\Delta x_k)
		\le
		\gamma_1
		\begin{bmatrix}\Delta y_k\\ \nu_k\end{bmatrix}^{\!\top}
		\Pi
		\begin{bmatrix}\Delta y_k\\ \nu_k\end{bmatrix}
		+\gamma_2\|w_k\|^2.
	\]
	Along admissible operator $\hat\Phi$ trajectories, the \gls{IQC} term is nonpositive by
	\eqref{eq:oracle-iqc} in Standing Assumption~\ref{standing:learning_mapping}.(i). Since $\gamma_1\ge0$, we immediately obtain:
	\[
		V(\Delta x_{k+1})\le \rho^2V(\Delta x_k)+\gamma_2\|w_k\|^2.
	\]
	To meet Definition~\ref{def:unif_stab}, the two trajectories have the same initial condition by construction, so $\Delta x_0=0$, which implies $V(\Delta x_0)=0$. Telescoping the relation above leaves us with:
	\[
		V(\Delta x_T)
		\le
		\gamma_2
		\sum_{k=0}^{T-1}\rho^{2(T-k-1)}\|w_k\|^2.
	\]
	Finally, by noticing that $V(\Delta x_T)\ge\lambda_{\textrm{min}}(P)\|\Delta x_T\|^2$, we directly obtain
	\eqref{eq:induced-gain-bound}.
\end{proof}

\begin{remark}
	In case the multiplier class in \eqref{eq:oracle-iqc} is a cone, the positive scalar $\gamma_1$ can be absorbed in $\Pi$ directly and one may set $\gamma_1=1$ without loss of generality. If $\Pi$ is fixed or normalized, however, $\gamma_1\ge0$ remains a decision variable.
	\hfill$\square$
\end{remark}

The robust performance object is hence the induced gain from $w$ to the terminal state sensitivity $\Delta x_T$. The result above isolates this gain before imposing any specific pointwise bound on $w_k$. It is useful to introduce the weighted data-injection norm $\|w\|_{\rho,T}
	\eqdef
	\left(
	\sum_{k=0}^{T-1}\rho^{2(T-1-k)}\|w_k\|^2
	\right)^{1/2}$, which turns the bound in Theorem~\ref{thm:induced-gain} as
$
	\|\Delta x_T\|
	\le
	\sqrt{\gamma_2/\lambda_{\textrm{min}}(P)}\,
	\|w\|_{\rho,T}.
$

A uniform stability bound is recovered by bounding the data-injection signal uniformly, yielding the following corollary:

\begin{corollary}\label{cor:uniform-stability-finite-gain}
Under the same hypotheses of Theorem~\ref{thm:induced-gain}, and in case \eqref{eq:main-lmi-induced-gain} is solved for $\rho\in[0,1)$, the algorithm
$\mc A_N(\mc S_N)=x_T$ associated with the dynamics in \eqref{eq:dynamics} is uniformly stable with
\begin{equation}\label{eq:eta-induced-gain}
	\eta
	\le
	\kappa \zeta_N
	\sqrt{\frac{\gamma_2}{\lambda_{\normaltext{\textrm{min}}}(P)(1-\rho^2)}}.
	\end{equation}
\hfill$\square$
\end{corollary}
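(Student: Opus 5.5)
The plan is to chain three bounds: Theorem~\ref{thm:induced-gain}, the uniform data-injection bound in Standing Assumption~\ref{standing:learning_mapping}.(ii), and the Lipschitz property in Standing Assumption~\ref{standing:loss_function}.

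\emph{Step 1 (reduce to the terminal sensitivity).} Fix $N\ge1$, $\mc S_N\in\Xi^N$, $j\in\{1,\ldots,N\}$ and a replacement $\xi'$. Build the two trajectories from the common initial condition $x_0=x'_0$; they are well posed by Standing Assumption~\ref{standing:system_dynamics}. Then $H_N=x_T$ and $H_{N^j}=x'_T$. By $\kappa$-Lipschitz continuity, for every $\xi\in\Xi$,
\[
|\ell(x_T,\xi)-\ell(x'_T,\xi)|\le\kappa\|\Delta x_T\|.
\]
So it suffices to bound $\|\Delta x_T\|$ by a quantity independent of $\xi$, $j$, $\mc S_N$ and $T$.

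\emph{Step 2 (bound the injected disturbance).} By construction, $w_k=\hat\Phi(y'_k,\mc S_N)-\hat\Phi(y'_k,\mc S_N^j)$. This is the operator discrepancy evaluated at the single point $y'_k\in\R^p$. Hence Standing Assumption~\ref{standing:learning_mapping}.(ii) gives $\|w_k\|\le\zeta_N$ for all $k$. In the local variant, one uses $\zeta_N(\mathcal Y)$ provided the $y'_k$ stay in $\mathcal Y$.

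Substituting into \eqref{eq:induced-gain-bound}, and using $\rho\in[0,1)$ to bound the finite geometric sum by the full series, gives
\[
\sum_{k=0}^{T-1}\rho^{2(T-k-1)}\|w_k\|^2\le\zeta_N^2\sum_{i=0}^{T-1}\rho^{2i}\le\frac{\zeta_N^2}{1-\rho^2}.
\]
It follows that $\|\Delta x_T\|\le\zeta_N\sqrt{\gamma_2/(\lambda_{\min}(P)(1-\rho^2))}$, uniformly in $T\ge1$. The hypotheses of Theorem~\ref{thm:induced-gain} apply because the sensitivity signals satisfy \eqref{eq:sensitivity-system} exactly. They also satisfy the \gls{IQC} \eqref{eq:oracle-iqc}, since $\nu_k$ is an increment of the fixed map $\hat\Phi(\cdot,\mc S_N)$ between $y_k$ and $y'_k$, which is precisely the setting of Standing Assumption~\ref{standing:learning_mapping}.(i).

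\emph{Step 3 (conclude).} Combining Steps 1 and 2 and taking the supremum over $\xi$ yields the bound required by Definition~\ref{def:unif_stab}, with $\eta$ as in \eqref{eq:eta-induced-gain}, since the bound holds for every $\mc S_N$ and $j$. No step is really hard. The main point of care is the requirement $\rho<1$, which is what turns the weighted norm $\|w\|_{\rho,T}$ into a $T$-independent constant; for $\rho\ge1$ the sum would grow with $T$. A secondary point is checking that the \gls{IQC} is applied to the pair $(y_k,y'_k)$ under the same dataset $\mc S_N$, which the decomposition of $\Delta u_k$ guarantees.
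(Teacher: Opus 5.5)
Your proposal is correct and follows essentially the same route as the paper. It bounds $\|w_k\|\le\zeta_N$ via Standing Assumption~\ref{standing:learning_mapping}.(ii), collapses the weighted geometric sum using $\rho<1$, applies Theorem~\ref{thm:induced-gain} to bound $\|x_T-x'_T\|$, and finishes with the $\kappa$-Lipschitz loss; your extra checks (the IQC applied to $(y_k,y'_k)$ under the same dataset $\mc S_N$, and $T$-independence of the bound) just make explicit what the paper leaves implicit.
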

\begin{proof}
	By Standing Assumption~\ref{standing:learning_mapping}.(ii), $\|w_k\|\le \zeta_N$ for all $k\ge0$. Since $\rho<1$,
	\[
		\|w\|_{\rho,T}
		\le
		\zeta_N
		\left(\sum_{k=0}^{T-1}\rho^{2(T-k-1)}\right)^{1/2}
		\le
		\frac{\zeta_N}{\sqrt{1-\rho^2}}.
	\]
	With $\mc A_N(\mc S_N^j)=x'_T$, for any $j\in\{1,\ldots,N\}$, from Theorem~\ref{thm:induced-gain} one directly obtains:
	\[
		\|x_T-x'_T\|
		\le
		\zeta_N
		\sqrt{\frac{\gamma_2}{\lambda_{\textrm{min}}(P)(1-\rho^2)}}.
	\]
	Using the Lipschitz property of the loss function in Standing Assumption~\ref{standing:loss_function},
	$
		\sup_{\xi\in\Xi}
		|\ell(x_T,\xi)-\ell(x'_T,\xi)|
		\le
		\kappa
		\|x_T-x'_T\|,
	$
	which immediately proves \eqref{eq:eta-induced-gain}.
\end{proof}

Our dissipativity-based perspective allows one to identify three distinct terms contributing to the stability coefficient, i.e.,
\[
\eta
	\le
	\underbrace{\kappa}_{\text{loss readout}}
	\underbrace{\zeta_N}_{\text{data injection}}
	\underbrace{
	\sqrt{\frac{\gamma_2}{\lambda_{\textrm{min}}(P)(1-\rho^2)}}
	}_{\text{dynamical sensitivity gain}}.
\]
It turns out that algorithmic stability is directly controlled by how well the sensitivity dynamics either dissipates or amplifies the new data sensitivity injected at each step. 
Then, to obtain tight generalization bounds consistent with $N$, e.g., Lemma~\ref{lemma:exp_bound}, notice that one must have $\zeta_N\propto 1/N$, which happens to be the case for several learning mappings of interest---see Appendix~\ref{app:1overN}.

We conclude by stating the following result that will be useful for the momentum-based schemes discussed in \S \ref{sec:data_based_algorithms}:

\begin{corollary}\label{cor:Rass}
Let the hypothesis be $H_N=Rx_T$ for some matrix $R\in\R^{q\times n}$, and suppose the assumptions of
Theorem~\ref{thm:induced-gain} hold true. If $P\succeq R^\top R$ 
, then
\[
\sup_{\xi\in\Xi}
|\ell(Rx_T,\xi)-\ell(Rx_T',\xi)|
\le
\kappa\zeta_N
\sqrt{
\gamma_2
\frac{1-\rho^{2T}}{1-\rho^2}
},
\]
which in case $\rho\in[0,1)$ turns into:
\[
\sup_{\xi\in\Xi}
|\ell(Rx_T,\xi)-\ell(Rx_T',\xi)|
\le
\kappa\zeta_N
\sqrt{\frac{\gamma_2}{1-\rho^2}}.
\]
\hfill$\square$
\end{corollary}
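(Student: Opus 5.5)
We reuse the telescoped dissipation estimate from the proof of Theorem~\ref{thm:induced-gain}, but convert $V(\Delta x_T)$ into a bound on the hypothesis gap $R\Delta x_T$ rather than on $\Delta x_T$ itself. As in that proof, Standing Assumption~\ref{standing:system_dynamics} gives well-posedness of both trajectories, and both start from the same $x_0$, so $\Delta x_0=0$. The matrix inequality \eqref{eq:main-lmi-induced-gain} together with the \gls{IQC} \eqref{eq:oracle-iqc} and $\gamma_1\ge0$ yields $V(\Delta x_{k+1})\le\rho^2V(\Delta x_k)+\gamma_2\|w_k\|^2$. Telescoping then gives
\[
V(\Delta x_T)\le\gamma_2\sum_{k=0}^{T-1}\rho^{2(T-k-1)}\|w_k\|^2 .
\]

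Next we replace the step $V(\Delta x_T)\ge\lambda_{\textrm{min}}(P)\|\Delta x_T\|^2$ with the hypothesis $P\succeq R^\top R$. This gives
\[
\|R\Delta x_T\|^2=\Delta x_T^\top R^\top R\,\Delta x_T\le \Delta x_T^\top P\,\Delta x_T=V(\Delta x_T).
\]
By Standing Assumption~\ref{standing:learning_mapping}.(ii), $\|w_k\|\le\zeta_N$ for every $k$, uniformly in $j$ and along the trajectory $y'_k$. The remaining geometric sum is $\sum_{k=0}^{T-1}\rho^{2(T-k-1)}=\sum_{i=0}^{T-1}\rho^{2i}=(1-\rho^{2T})/(1-\rho^2)$. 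Combining these, $\|Rx_T-Rx'_T\|\le\zeta_N\sqrt{\gamma_2(1-\rho^{2T})/(1-\rho^2)}$.

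We then apply Standing Assumption~\ref{standing:loss_function}, with $\ell$ taken as $\kappa$-Lipschitz in its first argument (now living in $\R^q$). This gives $\sup_{\xi\in\Xi}|\ell(Rx_T,\xi)-\ell(Rx'_T,\xi)|\le\kappa\|R(x_T-x'_T)\|$, which is the first claimed bound. For $\rho\in[0,1)$ we have $1-\rho^{2T}\le1$, so the $T$-independent bound follows immediately.

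There is no real obstacle here. The only delicate point is the geometric sum formula, which is ill-defined at $\rho=1$. In that case we interpret $(1-\rho^{2T})/(1-\rho^2)$ as its limit $T$, which matches the sum directly. We also note that we never need $\lambda_{\textrm{min}}(P)$, because the normalization $P\succeq R^\top R$ absorbs it.
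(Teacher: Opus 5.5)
Your proof is correct and follows the route the paper intends. The paper states this corollary without a written proof, but the intended argument is the proof of Corollary~\ref{cor:uniform-stability-finite-gain} with $V(\Delta x_T)\ge\lambda_{\textrm{min}}(P)\|\Delta x_T\|^2$ replaced by $\|R\Delta x_T\|^2\le V(\Delta x_T)$ (from $P\succeq R^\top R$), and your handling of $\rho=1$ via the limit $T$ and of the loss as Lipschitz on $\R^q$ are appropriate refinements.
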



\subsection{Robust performance interpretation}\label{subsec:robust_performance}
\begin{figure}[t]
	\centering
	\begin{tikzpicture}[>=latex]

	\node[draw, rectangle, minimum width=3.0cm, minimum height=1.0cm, align=center]
			(G) at (0, 0) {Linear system \\ $(A,B,C,D)$};

	\node[draw, rectangle, minimum width=3.0cm, minimum height=1.0cm, align=center]
			(Delta) at (0,-2) {$\hat\Phi$-increment \\ (IQC: $\Pi$)};

	\draw[->] (-3.2, 0.25) -- node[above]{\small $w_k$} (G.west |- {(0,0.25)});

	\draw[->] (G.east |- {(0,0.25)}) -- node[above]{\small $\Delta z_T$} (3.2, 0.25);

	\draw[->] (G.east |- {(0,-0.25)}) -- (2.4,-0.25)
				-- (2.4,-2) -- node[above]{\small $\Delta y_k$} (Delta.east);

	\draw[->] (Delta.west) -- (-2.4,-2)
				-- (-2.4,-0.25) -- node[above]{\small $\nu_k$} (G.west |- {(0,-0.25)});

	\end{tikzpicture}
	\caption{Feedback interconnection of the sensitivity
	system~\eqref{eq:sensitivity-system}. 
	Uniform stability is a bounded-gain condition on $w_k \mapsto \Delta z_T$.}
	\label{fig:block_diagram}
\end{figure}

In our analysis the object of central interest coincides with the finite-horizon \gls{IO} mapping:
\begin{equation}
  \label{eq:perf_map}
  w \mapsto \Delta z_T,
  \qquad
  \Delta z_T(\xi) \eqdef \ell(x_T,\xi)-\ell(x_T',\xi),
\end{equation}
where $w_k = \hat\Phi(y_k', \mc S_N)-\hat\Phi(y_k', \mc S_N^j)$ is the exogenous disturbance introduced by replacing one training sample, and $\Delta z_T$ is the resulting terminal loss difference. Uniform algorithmic stability requires $\sup_{\xi\in\Xi}|\Delta z_T(\xi)|\leq\eta$ for every neighbouring dataset pair $\mc S_N$ and $\mc S_N^j$. Equivalently, this can be interpreted as a bounded-gain condition on the mapping in \eqref{eq:perf_map}.

This is precisely the structure of a standard robust-performance problem in the sense of $\mathcal{H}_\infty$/\gls{IQC} theory \cite{lessard2016analysis,megretski1997system}. Here, a linear system is driven by an exogenous disturbance $w$, with a structured uncertainty $\nu_k = \hat\Phi(y_k, \mc S_N)-\hat\Phi(y_k', \mc S_N)$ in the feedback connection, and a performance output $\Delta z_T$ whose induced gain must be certified. The dissipativity \gls{MI} in \eqref{eq:main-lmi-induced-gain} is the corresponding robust performance certificate.

Figure~\ref{fig:block_diagram} illustrates the feedback interconnection of the sensitivity system~\eqref{eq:sensitivity-system}, which consists of the following blocks:
\begin{enumerate}
  \item[i)] Linear system: The realization $(A,B,C,D)$ governs how sensitivity $\Delta x_k$ propagates across iteration;

  \item[ii)] Structured uncertainty: The increment $\nu_k = \hat\Phi(y_k,\mc S_N)-\hat\Phi(y_k', \mc S_N)$, constrained by the \gls{IQC}~\eqref{eq:oracle-iqc} with multiplier $\Pi$, constitutes the feedback nonlinearity;

  \item[iii)] Exogenous perturbation: The signal $w_k = \hat\Phi(y_k', \mc S_N)-\hat\Phi(y_k', \mc S_N^j)$ is the data-injection channel, carrying the sensitivity introduced by replacing sample $\xi^{(j)}$ at every $k$.
\end{enumerate}

Two design parameters directly shape performance in $\mathscr M$ in \eqref{eq:main-lmi-induced-gain}. Specifically, the multiplier $\Pi$ encodes different prior knowledge about the learning mapping, while the rate $\rho$ controls how aggressively past sensitivity is discounted. Note that different choices of the $\Pi$ may reduce conservatism (e.g., Zames-Falb conditions) at the cost of a higher-dimensional \gls{MI} compared to slope-restricted/sector constraints, cocoercivity, or monotonicity, all corresponding to a recognized class of \gls{IQC} \cite{yakubovich1971procedure,scherer2022dissipativity}. The resulting sensitivity gain is then a function of $(P,\rho,\gamma_1,\gamma_2)$, which can be optimized over to obtain the tightest bound for the chosen multiplier class $\Pi$---see \S \ref{subsec:computational_aspects}. 
Notice that the same argument used in Theorem~\ref{thm:induced-gain}, specific to static pointwise incremental \glspl{IQC}, also applies to the augmented sensitivity system including the state of the multiplier filter, and dynamic hard \glspl{IQC} over $T$ can be incorporated directly.


\subsection{Optimizing the dynamical sensitivity gain}\label{subsec:computational_aspects}
The adopted dissipativity-based perspective therefore offers the opportunity to directly optimize over the sensitivity gain, since it merely depends on quantities related with the storage function and supply rate of the exogenous disturbance $w_k$. For a feasible tuple $(P,\rho,\gamma_1,\gamma_2)$ of \eqref{eq:main-lmi-induced-gain}, define the sensitivity gain:
\[
 G(P,\rho,\gamma_2)
 \eqdef
 \sqrt{\frac{\gamma_2}{\lambda_{\textrm{min}}(P)(1-\rho^2)}}.
\]
Since the \gls{MI} is homogeneous in $(P,\gamma_1,\gamma_2)$, one can then impose a normalization such as $P\succcurlyeq I$.
For a fixed $\rho\in[0,1)$, let us then define
\begin{equation}\label{eq:gamma-star-rho}
	\begin{aligned}
	\gamma_2^\star(\rho)\eqdef
	&\underset{P,\gamma_1,\gamma_2}{\textrm{min}}&& \gamma_2\\
	&~\textrm{ s.t. }&&\mathscr M(P,\rho,\gamma_1,\gamma_2)
	\preccurlyeq0,\\
	&&& P\succcurlyeq I, \gamma_1, \gamma_2\ge0.
	\end{aligned}
\end{equation}
Note that \eqref{eq:gamma-star-rho} amounts to an \gls{SDP} when $\rho$ is fixed. The optimized sensitivity gain then reads as:
\begin{equation}\label{eq:G-star}
	G^\star
	\eqdef
	\underset{\rho\in[0,1)}{\textrm{inf}}~
	\sqrt{\frac{\gamma_2^\star(\rho)}{1-\rho^2}},
\end{equation}
which on the other hand requires scalar search over $[0,1)$ to apply Corollaries \ref{cor:uniform-stability-finite-gain} or \ref{cor:Rass}. The resulting uniform stability coefficient is thus $\eta\le\kappa \zeta_N G^\star$, with optimal sensitivity gain $G^\star$ being the tightest bound the chosen multiplier class $\Pi$ can certify for the realization $(A,B,C,D)$ at hand.

Armed with this, the generalization bound in Lemma~\ref{lemma:exp_bound} yields, with probability $1-\delta$ over the multisample extraction:
\[
	r(x_T)
	-
	\hat r(x_T, \mc S_N)
	\le
	\kappa \zeta_N G^\star
	+(N\kappa \zeta_N G^\star+M)
	\sqrt{\frac{2\ln(1/\delta)}{N}},
\]
which quantifies the generalization capabilities of the observed learning-enabled dynamics, measured by the loss function $\ell$.
\section{Data-based optimization dynamics}\label{sec:data_based_algorithms}
We now apply the proposed dissipativity-based perspective to the analysis of the algorithmic stability of data-based optimization dynamics. 
Specifically, we consider here algorithms tailored to \gls{SAA} for \gls{ERM}, where the latter denotes the empirical counterpart of the stochastic optimization problem:
\[
	\underset{x\in\mathbb R^n}{\textrm{min}}~\expected{\prob{P}}{f(x,\xi)},
\]
with $f:\R^n\times\Xi\to\R$, whose sample-based approximation relying on \gls{iid} samples gives the resulting empirical risk:
\begin{equation}\label{eq:empirical_risk_2}
	\underset{x\in\mathbb R^n}{\textrm{min}}~\frac1N\sum_{i=1}^N f(x,\xi^{(i)}).
\end{equation}

\subsection{Gradient descent method}
Given a step-size $\alpha>0$, applying the \gls{GD} to \eqref{eq:empirical_risk_2} yields:
\[
	x_{k+1}=x_k-\alpha \left(\frac1N\sum_{i=1}^N \nabla f(x_k,\xi^{(i)})\right),
\]
which trivially amounts to a special case of the feedback form in \eqref{eq:dynamics} with system matrices:
\[
    A=I,
    \qquad
    B=-\alpha I,
    \qquad
    C=I,
    \qquad
    D=0,
\]
and data-driven operator $\hat\Phi(y,\mc S_N)=\frac1N\sum_{i=1}^N \nabla f(y,\xi^{(i)})$.
Thus, the general sensitivity system \eqref{eq:sensitivity-system} specializes to the sensitivity dynamics of \gls{GD}. Here, in view of the structure of the output matrix $C$, the data-injection disturbance generated by the replacement of one sample satisfies: 
\[
	w_k\!=\!\hat\Phi(x'_k, \mc S_N)-\hat\Phi(x'_k, \mc S_N^j)\!=\!\frac1N\left(\nabla f(x'_k,\xi^{(j)})-\nabla f(x'_k,\xi')\right).
\]
Therefore, according to Corollary~\ref{cor:SAA_mappings}, if there exists $B>0$ such that $\|\nabla f(x,\xi)\|\le B$ for all $x\in\mathbb R^n$ and $\xi\in\Xi$, then $\|w_k\|\le 2B/N$ for all $k\ge0$, i.e., $\zeta_N\le2B/N$ in the notation of Theorem~\ref{thm:induced-gain}. The remaining part of the stability estimate is purely dynamical, as it depends on how the
\gls{GD} dynamics dissipates the injected sensitivity. For completeness, we next state and prove a popular result available in the literature \cite{hardt2016train}:

\begin{corollary}\label{cor:gd-stability}
	Given any $N\ge1$ and $\mc S_N\in\Xi^N$, let $x\mapsto f(x,\xi)$ be
	$m$-strongly convex and $L$-smooth. Choose $\alpha\in(0,2/(L+m)]$, and define $q\eqdef\normaltext{\textrm{max}}\{|1-\alpha m|,|1-\alpha L|\}<1$
	Then, the \gls{GD} is uniformly stable with
	\begin{equation}\label{eq:gd-finite-time-stability}
		\eta
		\le
		\frac{2\alpha \kappa B}{N(1-q)}.
	\end{equation}
	If $\alpha\in(0,1/L)$, then $q=1-\alpha m$, and $\eta\le2 \kappa B/mN$.
	\hfill$\square$
\end{corollary}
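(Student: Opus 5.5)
The plan is to instantiate Theorem~\ref{thm:induced-gain} and Corollary~\ref{cor:uniform-stability-finite-gain} with the \gls{GD} realization $A=I$, $B=-\alpha I$, $C=I$, $D=0$. In this realization $\Delta y_k=\Delta x_k$, and the sensitivity system \eqref{eq:sensitivity-system} becomes $\Delta x_{k+1}=(\Delta x_k-\alpha\nu_k)-\alpha w_k$. The proof then has three pieces: a contraction estimate for the first term, the injection bound $\zeta_N\le 2B/N$ established just above via Corollary~\ref{cor:SAA_mappings}, and the Lipschitz readout of Standing Assumption~\ref{standing:loss_function}.

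\textbf{Step 1 (IQC and contraction).} Since $\hat\Phi(\cdot,\mc S_N)$ is an average of gradients of $m$-strongly convex, $L$-smooth functions, it is itself the gradient of an $m$-strongly convex, $L$-smooth function. It therefore satisfies the standard sector incremental \gls{IQC}
\[
(\nu-m\Delta y)^\top(\nu-L\Delta y)\le 0,
\]
which gives an admissible multiplier $\Pi$ for Standing Assumption~\ref{standing:learning_mapping}.(i). From this IQC I will show the classical estimate $\|\Delta x_k-\alpha\nu_k\|\le q\|\Delta x_k\|$. One way is the S-procedure: the inequality $\|\Delta x-\alpha\nu\|^2-q^2\|\Delta x\|^2\le -\gamma_1(\nu-m\Delta x)^\top(\nu-L\Delta x)$ holds for a suitable $\gamma_1\ge0$. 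Equivalently, on the scalar level one uses $\max_{\lambda\in[m,L]}|1-\alpha\lambda|=q$. The condition $\alpha\le 2/(L+m)$ makes $|1-\alpha m|$ the binding term and guarantees $q<1$. This step is the main technical point, but it is textbook material (e.g., \cite{lessard2016analysis}), so I expect only bookkeeping.

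\textbf{Step 2 (dissipativity with the right $\rho$).} The cross term between $\Delta x_k-\alpha\nu_k$ and $w_k$ must be handled so that the squared-norm certificate does not lose against the $1/(1-q)$ rate. With $P=I$, Young's inequality gives, for any $\varepsilon>0$,
\[
\|\Delta x_{k+1}\|^2\le(1+\varepsilon)q^2\|\Delta x_k\|^2+(1+1/\varepsilon)\alpha^2\|w_k\|^2 .
\]
This is exactly the dissipation inequality in the proof of Theorem~\ref{thm:induced-gain}, i.e., feasibility of \eqref{eq:main-lmi-induced-gain}, with $\rho^2=(1+\varepsilon)q^2$ and $\gamma_2=(1+1/\varepsilon)\alpha^2$. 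Choosing $\varepsilon=(1-q)/q$ yields $\rho^2=q<1$ and $\gamma_2=\alpha^2/(1-q)$, so that $\gamma_2/(1-\rho^2)=\alpha^2/(1-q)^2$. If $q=0$, the bound $\|\Delta x_{k+1}\|\le\alpha\|w_k\|$ is immediate and the same conclusion follows.

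\textbf{Step 3 (conclusion).} Since $\lambda_{\min}(P)=1$, Corollary~\ref{cor:uniform-stability-finite-gain} with $\zeta_N\le 2B/N$ gives $\eta\le\kappa\,(2B/N)\,\alpha/(1-q)$, which is \eqref{eq:gd-finite-time-stability}. As a direct check, one can instead unroll $\|\Delta x_{k+1}\|\le q\|\Delta x_k\|+\alpha\zeta_N$ from $\Delta x_0=0$, which gives the same bound. For the last claim, $\alpha<1/L\le 1/m$ gives $0<1-\alpha L<1-\alpha m$, so $q=1-\alpha m$ and $1-q=\alpha m$. Substituting into \eqref{eq:gd-finite-time-stability} yields $\eta\le 2\kappa B/(mN)$.
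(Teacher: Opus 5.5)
Your argument is sound, apart from a sign slip noted below, but its main route differs from the paper's. The paper never invokes the matrix inequality here. It uses the classical contraction of the gradient step to get $\|\Delta x_{k+1}\|\le\|\mc T_N(x_k)-\mc T_N(x_k')\|+\|\mc T_N(x_k')-\mc T_N^j(x_k')\|\le q\|\Delta x_k\|+\alpha\|w_k\|$, and then sums the geometric series from $\Delta x_0=0$. That is exactly your ``direct check''.

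Your primary argument instead exhibits the explicit feasible point $(P,\rho^2,\gamma_2)=(I,q,\alpha^2/(1-q))$ of \eqref{eq:main-lmi-induced-gain} and reads the bound off Corollary~\ref{cor:uniform-stability-finite-gain}. This proves something the paper only asserts after its proof: the dissipativity certificate reproduces the contraction bound, $G^\star_{\rm GD}\le\alpha/(1-q)$. It also isolates the one nontrivial choice, namely that $\rho$ must strictly exceed $q$. With $\rho=q$ and $q>0$ the inequality is infeasible for every $P\succ0$, because at the admissible pair $\nu=m\Delta x$ the cross term $-2\alpha q\,\Delta x^\top Pw$ is unbounded. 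Moreover, your $\varepsilon=(1-q)/q$ is exactly the minimizer of $\gamma_2/(1-\rho^2)=\alpha^2(1+1/\varepsilon)/(1-(1+\varepsilon)q^2)$, so the quadratic storage loses nothing in the uniform bound.

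The paper's route, in turn, is elementary and needs no S-procedure. At finite $T$ it gives the factor $(1-q^T)/(1-q)$, which is sharper than the $\sqrt{1-q^T}/(1-q)$ obtained by telescoping the storage. Your remark that $\alpha\le 2/(L+m)$ already forces $q=1-\alpha m$ is correct. Hence $\eta\le 2\kappa B/(mN)$ holds on the whole admissible step-size range, not only for $\alpha<1/L$.

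The slip is in Step~1. As written, with $\gamma_1\ge0$, your S-procedure inequality fails at $\Delta x=0$, $\nu\neq0$: the left side is $\alpha^2\|\nu\|^2$ and the right side is $-\gamma_1\|\nu\|^2$. Feasibility of \eqref{eq:main-lmi-induced-gain} requires the inequality for all $(\Delta x,\nu)$, not only admissible ones. The correct form is $\|\Delta x-\alpha\nu\|^2-q^2\|\Delta x\|^2\le\gamma_1(\nu-m\Delta x)^\top(\nu-L\Delta x)$. For $L>m$, take $\gamma_1=2\alpha(1-\alpha m)/(L-m)$. Then the left side minus the right side equals $-(\gamma_1-\alpha^2)\|\nu-m\Delta x\|^2$, and $\gamma_1\ge\alpha^2$ is exactly the condition $\alpha\le 2/(L+m)$. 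Inserting this bound into Young's inequality $\|\Delta x-\alpha\nu-\alpha w\|^2\le(1+\varepsilon)\|\Delta x-\alpha\nu\|^2+(1+1/\varepsilon)\alpha^2\|w\|^2$ gives \eqref{eq:main-lmi-induced-gain} with multiplier $(1+\varepsilon)\gamma_1$. If $L=m$ and $q>0$, no $\gamma_1$ satisfies that inequality, since the sector constraint degenerates to $\|\nu-m\Delta x\|^2\le0$; in that case rely on your direct unrolling.
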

\begin{proof}
	Pick any $j\in\{1,\ldots,N\}$, and let $\mc S_N^j$ be the associated neighboring dataset of $\mc S_N$. In addition, let $x_k$, $x'_k$ be the corresponding state trajectories of the \gls{GD} initialized from the same $x_0=x'_0$. By adding and subtracting $\frac1N(\sum_{i=1,i\neq j}^N \nabla f(x'_k,\xi^{(i)}) + \nabla f(x'_k,\xi'))$ to $\Delta x_k=x_k-x'_k$ one immediately obtains $\Delta x_{k+1}=\Delta x_k-\alpha (\nu_k + w_k)$, with exactly the same definitions for signals $\nu_k$ and $w_k$ given in \S \ref{sec:sensitivity_approach}.
	Since $x\mapsto\frac1N\sum_{i=1}^N f(x,\xi^{(i)})$ is $m$-strongly convex and $L$-smooth, the gradient step $\mc T_N(x)\eqdef x-\frac{\alpha}{N}\sum_{i=1}^N \nabla f(x,\xi^{(i)})$ is contractive with rate $q$. Therefore, one directly obtains:
	\ifTwoColumn
	\[
		\begin{aligned}
			\|\Delta x_{k+1}\|
			&=\|\mc T_N(x_k)-\mc T_N^j(x'_k)\|\\
			&\le \|\mc T_N(x_k)-\mc T_N(x'_k)\|
				+\|\mc T_N(x'_k)-\mc T_N^j(x'_k)\|\\
			&\le q\|\Delta x_k\|+\alpha\|w_k\|.
		\end{aligned}
	\]
	\else
	\[
		\begin{aligned}
			\|\Delta x_{k+1}\|
			&=\|\mc T_N(x_k)-\mc T_N^j(x'_k)\|\le \|\mc T_N(x_k)-\mc T_N(x'_k)\|
				+\|\mc T_N(x'_k)-\mc T_N^j(x'_k)\|\\
			&\le q\|\Delta x_k\|+\alpha\|w_k\|.
		\end{aligned}
	\]
	\fi
	Moreover, $w_k=\frac1N\left(\nabla f(x'_k,\xi^{(j)})-\nabla f(x'_k,\xi')\right)$, so $\|w_k\|\le 2B/N$. Hence $\|\Delta x_{k+1}\|\le q\|\Delta x_k\|+2\alpha B/N$. Since $\Delta x_0=0$, unrolling the recursion over $T$ immediately leaves us with
	\[
		\|\Delta x_T\|
		\le
		\frac{2\alpha B}{N}\sum_{k=0}^{T-1}q^k
		=
		\frac{2\alpha B}{N}\frac{1-q^T}{1-q}\le\frac{2\alpha B}{N(1-q)}.
	\]
	Using the $\kappa$-Lipschitz continuity of the loss function proves \eqref{eq:gd-finite-time-stability}, while the remaining claim follows by simply considering that for $\alpha\in(0,1/L)$, $q=1-\alpha m$.
\end{proof}

The bounds in Corollary~\ref{cor:gd-stability} admit the same interpretation as the general results in \S \ref{subsec:main_results}. Specifically, the replacement of one sample injects at most $\alpha\|w_k\|\le 2\alpha B/N$ units of state sensitivity at each iteration. The contraction factor $q$ determines how much of this injected sensitivity remains after consecutive iterations. The stability coefficient is therefore the product of three factors, which in the strongly convex case generalizes because it forgets one-sample perturbations geometrically, and the accumulated sensitivity saturates. If, instead, $q=1$, one recovers the linear-in-time estimate $\eta\le 2\kappa\alpha B T/N$. Finally, with $q>1$, sample perturbations may be amplified.

Notice that if $x\mapsto\frac1N\sum_{i=1}^N f(x,\xi^{(i)})$ is $m$-strongly convex and $L$-smooth, then the map
$x\mapsto\frac1N\sum_{i=1}^N \nabla f(x,\xi^{(i)})$ is slope-restricted in the sector $[m,L]$. In view of the structure of the output matrix $C=I$, the pair $(\Delta x_k,\nu_k)$ then satisfies the \gls{IQC}:
\begin{equation}
    \begin{bmatrix}
        \Delta x_k\\
        \nu_k
    \end{bmatrix}^{\!\top}
    \begin{bmatrix}
m L & -(L+m)/2 \\
-(L+m)/2 & 1
	\end{bmatrix}		
    \begin{bmatrix}
        \Delta x_k\\
        \nu_k
    \end{bmatrix}
    \le 0.
\end{equation}
Therefore, the general \gls{MI}-based certificate in \eqref{eq:main-lmi-induced-gain} applies directly. In this simple case, the resulting sensitivity gain is consistent with the explicit contraction calculation above:
the dissipativity-based theorem recovers the classical contractivity-based uniform stability bound while assigning a systems-theoretic meaning to each factor.
\gls{GD} is perhaps the simplest instance illustrating the main message conveyed by this paper: algorithms that dissipate data sensitivity admit stability-based generalization guarantees, while algorithms that accumulate sensitivity may exhibit poor generalization.

\subsection{Heavy-ball algorithm}
Given strictly positive step-size $\alpha$ and momentum term $\beta$, the \gls{HB} iteration applied to the minimization of \eqref{eq:empirical_risk_2} reads as:
\[
 x_{k+1}=x_k-\alpha\left(\frac{1}{N}\sum_{i=1}^N\nabla f(x_k,\xi^{(i)})\right)+\beta(x_k-x_{k-1}).
\]
With the lifted state $\tilde x_k\eqdef\col(x_k,x_{k-1})$, the resulting dynamics can be written in control-affine form as \cite{lessard2016analysis}:
\[
\left\{
\begin{aligned}
	\tilde x_{k+1}&=A_{\rm HB}\tilde x_k+B_{\rm HB}\hat\Phi(y_k,\mc S_N),\\
	y_k&=C_{\rm HB}\tilde x_k,
\end{aligned}
\right.
\]
with nonlinear map $\hat\Phi(y_k,\mc S_N)=\frac{1}{N}\sum_{i=1}^N\nabla f(y_k,\xi^{(i)})$, and
\ifTwoColumn
	\[
	\begin{aligned}
		A_{\rm HB}
		&=
		\begin{bmatrix}
		(1+\beta)I&-\beta I\\
		I&0
		\end{bmatrix},
		\quad
		B_{\rm HB}
		=
		\begin{bmatrix}
		-\alpha I\\
		0
		\end{bmatrix},\\
		C_{\rm HB}&=\begin{bmatrix}I&0\end{bmatrix},
		\quad
		D_{\rm HB}=0.
	\end{aligned}
	\]
\else
	\[
		A_{\rm HB}
		=
		\begin{bmatrix}
		(1+\beta)I&-\beta I\\
		I&0
		\end{bmatrix},
		\quad
		B_{\rm HB}
		=
		\begin{bmatrix}
		-\alpha I\\
		0
		\end{bmatrix},
		\quad
		C_{\rm HB}=\begin{bmatrix}I&0\end{bmatrix},
		\quad
		D_{\rm HB}=0.
	\]
\fi
As in the \gls{GD} case, the neighboring-data perturbation associated to the sensitivity system amounts to $w_k=\tfrac{1}{N}(\nabla f(x_k',\xi^{(j)})-\nabla f(x_k',\xi'))$, and if $\|\nabla f(x,\xi)\|\le B$ for all $x\in\R^n$, $\xi\in\Xi$, one obtains $\|w_k\|\le 2B/N$ for all $k\ge0$. Hence, Corollary~\ref{cor:Rass} applied to hypothesis $H_N=[I \ 0]\tilde x_T$ leaves us with:
\[
 \eta_{\rm HB}
 \le
 \frac{2\kappa B}{N}G_{\rm HB}^\star,
\]
where $G_{\rm HB}^\star$ follows by solving \eqref{eq:gamma-star-rho}--\eqref{eq:G-star} with
$(A,B,C,D)=(A_{\rm HB},B_{\rm HB},C_{\rm HB},D_{\rm HB})$.

Akin to the proof of Corollary~\ref{cor:gd-stability}, we note that standard algorithmic stability arguments would implicitly require a pointwise one-step estimate in a prescribed norm as, e.g.,
\[
    \|\Delta x_{k+1}\|
    \leq q\|\Delta x_k\| + b\|w_k\|, \text{ with } q<1.
\]
Such a condition immediately implies a finite-gain bound, but it is stronger. The robust performance certificate of Theorem~\ref{thm:induced-gain} only requires the existence of a storage function for which the closed-loop sensitivity interconnection dissipates the data injection supply. This distinction becomes crucial for algorithms such as \gls{HB}. In fact, after lifting the state the dynamics may be Schur stable and have finite \gls{IO} gain even though the lifted update is not a contraction in the Euclidean norm.
\begin{example}\label{example:HB}
	Let $f(x,\xi)=\frac12\varrho(\xi)x^2$, with $0<m\leq \varrho(\xi)\leq L$ for all $\xi\in\Xi$, and define the empirical curvature as $\hat\varrho_N\eqdef\frac1N\sum_{i=1}^N \varrho(\xi^{(i)})$.
	Then the empirical gradient mapping coincides with $\hat\Phi(x,\mc S_N)=\hat\varrho_N x$. For a neighboring dataset $\mc S_N^j$, let $\hat\varrho_N^j$ denote the corresponding empirical curvature. In this quadratic, strongly convex setting, the \gls{HB} dynamics can be recasted into the linear lifted sensitivity dynamics by means of matrices:
	\[
		A_{\rm HB}(\hat\varrho_N)
		=
		\begin{bmatrix}
		1+\beta-\alpha\hat\varrho_N & -\beta\\
		1 & 0
		\end{bmatrix},
		\qquad
		B_{\rm HB}
		=
		\begin{bmatrix}
		-\alpha\\
		0
		\end{bmatrix},
	\]
	and
	$
	\begin{aligned}
		w_k
		=
		(\hat\varrho_N-\hat\varrho_N^j)x_k'
		=
		(\varrho(\xi^{(j)})-\varrho(\xi')x_k'/N.
	\end{aligned}
	$
	In particular, if $|x_k'|\leq R$ for all $k\ge0$, then $|w_k|\leq (L-m)R/N$. However, note that there exist $(\alpha,\beta)$ for which $A_{\rm HB}(\hat\varrho_N)$ is Schur stable but not a strict Euclidean contraction, e.g., $\beta=1/4$ and $\alpha=3/4\hat\varrho_N$, yielding
	\[
		A_{\rm HB}
		=
		\begin{bmatrix}
		1/2 & -1/4\\
		1 & 0
		\end{bmatrix}.
	\]
	The spectral radius is $\textnormal{\textrm{max}}~\{|\lambda_1(A_{\rm HB})|,|\lambda_2(A_{\rm HB})|\}=1/2<1$, while $\|A_{\rm HB}\|>1$. Hence, the \gls{HB} sensitivity dynamics has finite data sensitivity gain although the contractivity argument fails. In view of the Schur stability of $A_{\rm HB}$, standard Lyapunov arguments guarantee the existence of some $P\succ0$ satisfying $A_{\rm HB}^\top P A_{\rm HB}-P=-Q$, for some $Q\succ0$. Thus, for any $\rho\in(1/2,1)$, one can find $P\succ0$ such that $A_{\rm HB}^\top P A_{\rm HB}-\rho^2P\prec0$, giving the dissipativity certificate and hence a finite gain.
	\hfill$\square$
\end{example}

\subsection{Nesterov’s accelerated gradient method}
For given positive step-size $\alpha$ and momentum term $\beta$, the \gls{NAG} iteration applied to the minimization of \eqref{eq:empirical_risk_2} reads as:
\[
\left\{
\begin{aligned}
	z_k&=x_k+\beta(x_k-x_{k-1}),\\
	x_{k+1}&=z_k-\frac{\alpha}{N}\sum_{i=1}^N\nabla f(z_k,\xi^{(i)}).
\end{aligned}
\right.
\]
With the lifted state $\tilde x_k$ as for \gls{HB}, we readily obtain a linear system representation with nonlinear input and system matrices:
\ifTwoColumn
	\[
	\begin{aligned}
		A_{\rm NAG}
		&=
		\begin{bmatrix}
		(1+\beta)I&-\beta I\\
		I&0
		\end{bmatrix},
		\quad
		B_{\rm NAG}
		=
		\begin{bmatrix}
		-\alpha I\\
		0
		\end{bmatrix},\\
		C_{\rm NAG}&=\begin{bmatrix}(1+\beta)I&-\beta I\end{bmatrix},
		\quad
		D_{\rm NAG}=0.
	\end{aligned}
	\]
\else
	\[
		A_{\rm NAG}
		=
		\begin{bmatrix}
		(1+\beta)I&-\beta I\\
		I&0
		\end{bmatrix},
		\quad
		B_{\rm NAG}
		=
		\begin{bmatrix}
		-\alpha I\\
		0
		\end{bmatrix},
		\quad
		C_{\rm NAG}=\begin{bmatrix}(1+\beta)I&-\beta I\end{bmatrix},
		\quad
		D_{\rm NAG}=0.
	\]
\fi
Akin to the \gls{GD} and \gls{HB} methods, one can bound $\|w_k\|\le 2B/N$ for all $k\ge0$, and obtain a uniform stability bound once considered the hypothesis $H_N=[I \quad 0]\tilde x_T$ as:
\[
 \eta_{\rm NAG}
 \le
 \frac{2\kappa B}{N}G_{\rm NAG}^\star,
\]
where $G_{\rm NAG}^\star$ is obtained by solving \eqref{eq:gamma-star-rho}--\eqref{eq:G-star} with
$(A,B,C,D)=(A_{\rm NAG},B_{\rm NAG},C_{\rm NAG},D_{\rm NAG})$.
The example provided next confirms that, also for the lifted \gls{NAG} dynamics where contractivity may not hold, our dissipativity-based angle allows one to optimize over the dynamical sensitivity gain:
\begin{example}\label{example:NAG}
	With the same ingredients as in Example~\ref{example:HB}, the (lifted) \gls{NAG} iteration is characterized by 
	\[
	\begin{aligned}
		A_{\rm NAG}(\hat\varrho_N)
		&=
		\begin{bmatrix}
		(1-\alpha\hat\varrho_N)(1+\beta)
		&
		-(1-\alpha\hat\varrho_N)\beta\\
		1 & 0
		\end{bmatrix},
	\end{aligned}
	\]
	$B_{\rm NAG}=B_{\rm HB}$, and $w_k=\frac{\varrho(\xi^{(j)})-\varrho(\xi')}{N}z_k'$.
	As before, if $|z_k'|\leq R$ for all $k\ge0$, then $|w_k|\leq (L-m)R/N$. Also in this case, one can find suitable choices of $(\alpha,\beta)$ for which $A_{\rm NAG}(\hat\varrho_N)$ is Schur stable but not a strict Euclidean contraction, e.g., $\beta=1/2$ and $\alpha=1/2\hat\varrho_N$, which yields
	a spectral radius $\textnormal{\textrm{max}}~\{|\lambda_1(A_{\rm NAG})|,|\lambda_2(A_{\rm NAG})|\}=1/2<1$, while $\|A_{\rm NAG}\|>1$, and the same conclusion for \gls{HB} also applies to \gls{NAG}.
	\hfill$\square$
\end{example}

\begin{remark}\label{remark:comparison}
	Theorem~\ref{thm:induced-gain}, together with Corollaries~\ref{cor:uniform-stability-finite-gain} and \ref{cor:Rass}, produces algorithm-specific stability bounds, i.e.,
	\[
	\eta_{\rm GD}\le \frac{2\kappa B}{N}G_{\rm GD}^\star,
	\;
	\eta_{\rm HB}\le \frac{2\kappa B}{N}G_{\rm HB}^\star,
	\;
	\eta_{\rm NAG}\le \frac{2\kappa B}{N}G_{\rm NAG}^\star.
	\]
	Thus the underlying framework allows one to compare iterative algorithms not only through convergence rate, but also through their generalization capabilities.
	\hfill$\square$
\end{remark}

\subsection{Numerical experiments}
\begin{figure}[t]
    \centering
	\ifTwoColumn
		\includegraphics[width=\columnwidth]{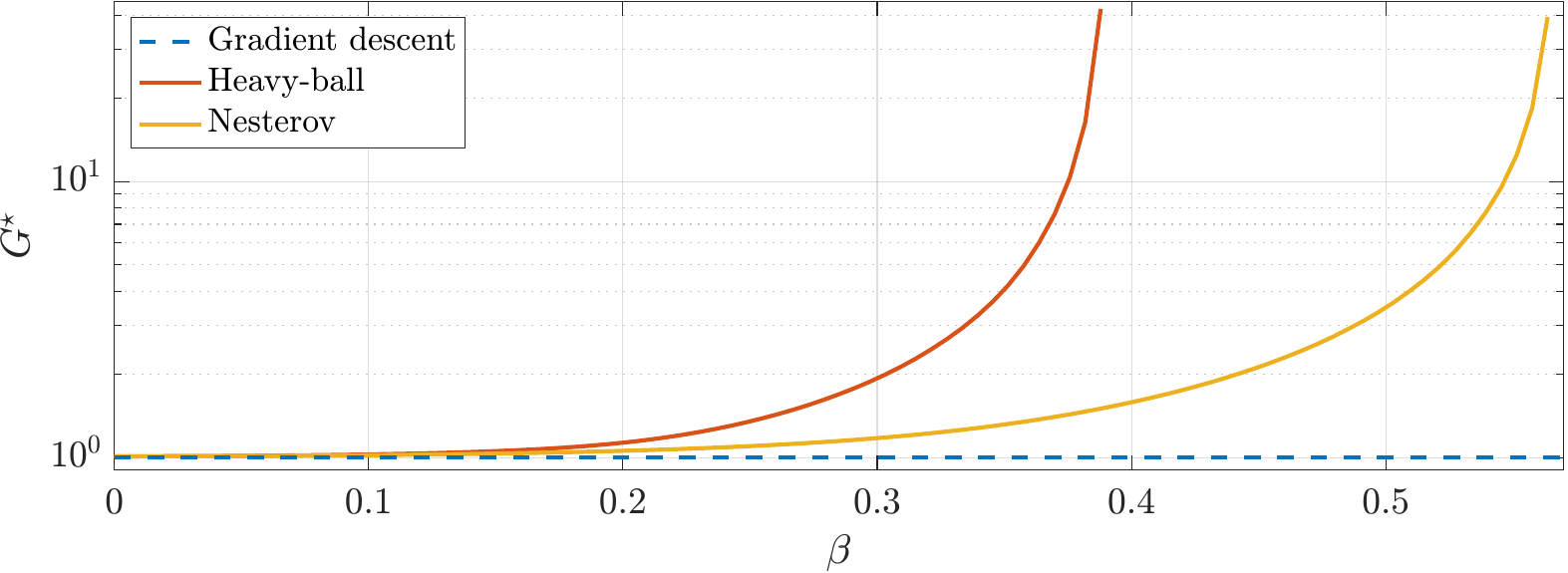}
		\includegraphics[width=\columnwidth]{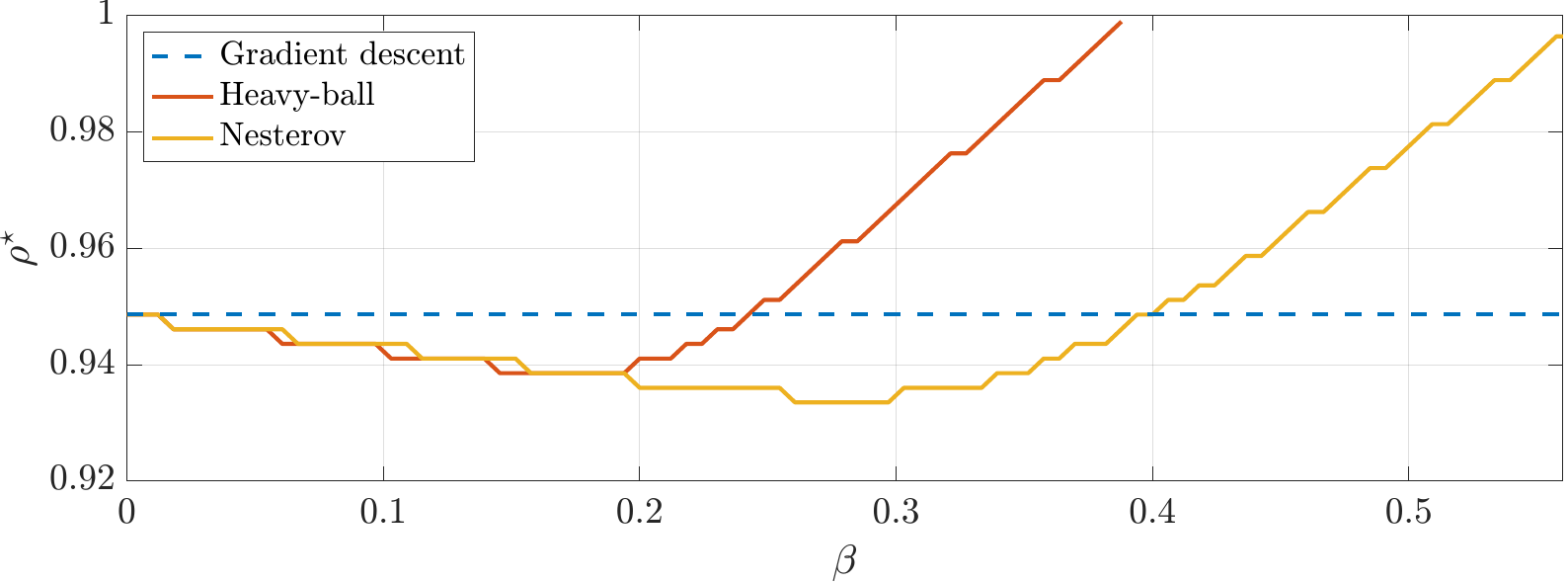}
	\else
		\includegraphics[width=.6\columnwidth]{sensitivity_comparison.pdf}
		\includegraphics[width=.6\columnwidth]{rho_comparison.pdf}
	\fi
    \caption{Top: Certified data sensitivity gain (logarithmic scale). Bottom: Optimized discount factor as functions of the momentum parameter $\beta$. 
	Coloured lines are shown only over those values of $\beta$ for which the \gls{SDP} certificate is feasible.}
    \label{fig:sensitivity_gain_momentum}
\end{figure}

We start by evaluating the dynamical component of the stability coefficient. 
Inspired by Remark~\ref{remark:comparison}, we fix the class of nonlinearity for $\hat\Phi$ by imposing the same sector \gls{IQC} $[m,L]$ for all methods, i.e., \gls{GD}, \gls{HB}, and \gls{NAG}, and compare the certified gains $G^\star$ obtained by solving \eqref{eq:gamma-star-rho}-\eqref{eq:G-star}. 
The simulations are run in MATLAB with MOSEK \cite{mosek} as solver, on a laptop with an Apple M2 chip featuring an 8-core CPU and 16 Gb RAM. 

\begin{figure}[t]
    \centering
	\ifTwoColumn
    	\includegraphics[width=\columnwidth]{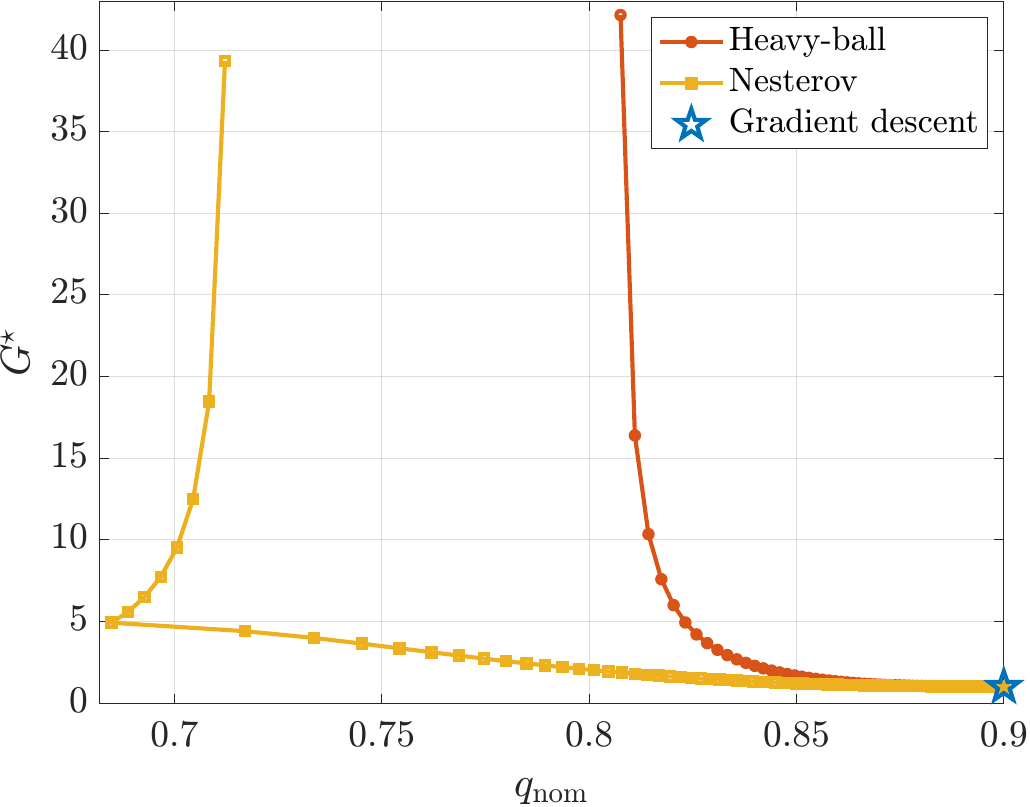}    
	\else
		\includegraphics[width=.6\columnwidth]{q_G_comparison.pdf}    
	\fi
	\caption{Convergence-sensitivity comparison. Each marker corresponds to a value of the momentum parameter $\beta$. 
	}
    \label{fig:tradeoff}
\end{figure}

In a scalar setting, i.e., $n=1$, by choosing $m=1$, $L=10$, and $\alpha=0.1$, Fig.~\ref{fig:sensitivity_gain_momentum} reports the certified sensitivity gain $G^\star$ and the corresponding optimized discount factor $\rho^\star$ as functions of the momentum parameter. Notice that, since the loss Lipschitz constant $\kappa$ and the data-injection size $\zeta_N$ are common across the methods, differences in the resulting stability certificates are entirely due to the algorithmic realization $(A,B,C,D)$.
For small momentum, the gains of \gls{HB} and \gls{NAG} are close to the \gls{GD} baseline. As $\beta$ increases, the certified gain grows rapidly, indicating that the algorithmic dynamics increasingly amplifies one-sample perturbations. In particular, \gls{HB} loses feasibility earlier than \gls{NAG}, which suggests a smaller certified robustness margin under the chosen quadratic storage and sector multiplier. The increase of $\rho^\star$ toward one further indicates that large momentum induces a longer memory of data perturbations.

Following the approach illustrated in Examples \ref{example:HB} and \ref{example:NAG}, 
we set $q_{\rm nom}\eqdef\textrm{max}_{t\in[m,L]} \textrm{max}_i |\lambda_i(A_{\rm alg}(t))|$ as the nominal worst-case convergence factor, where $A_{\rm alg}$ is the dynamical matrix obtained in the quadratic setting, ${\rm alg}\in\{$\gls{GD}, \gls{HB}, \gls{NAG}$\}$, and in Fig.~\ref{fig:tradeoff} we compare the certified data sensitivity gain $G^\star$ against it. 
While \gls{GD} appears as a single point, \gls{HB} and \gls{NAG} trace distinct curves as $\beta$ varies. Interestingly, nominal convergence and data sensitivity appear distinct properties, and methods with comparable values of $q_{\rm nom}$ may have substantially different certified gains. In particular, \gls{NAG} attains smaller nominal convergence factors than \gls{GD}, while maintaining moderate sensitivity over a wide range of $\beta$, whereas \gls{HB} exhibits a rapid increase in $G^\star$, as its certified robustness margin deteriorates. This is consistent with the robust performance interpretation provided in \S \ref{subsec:robust_performance}, where algorithmic stability amounts to a property of the data sensitivity system rather than as a consequence of nominal convergence alone.

\begin{figure}[t]
    \centering
	\ifTwoColumn
		\includegraphics[width=\columnwidth]{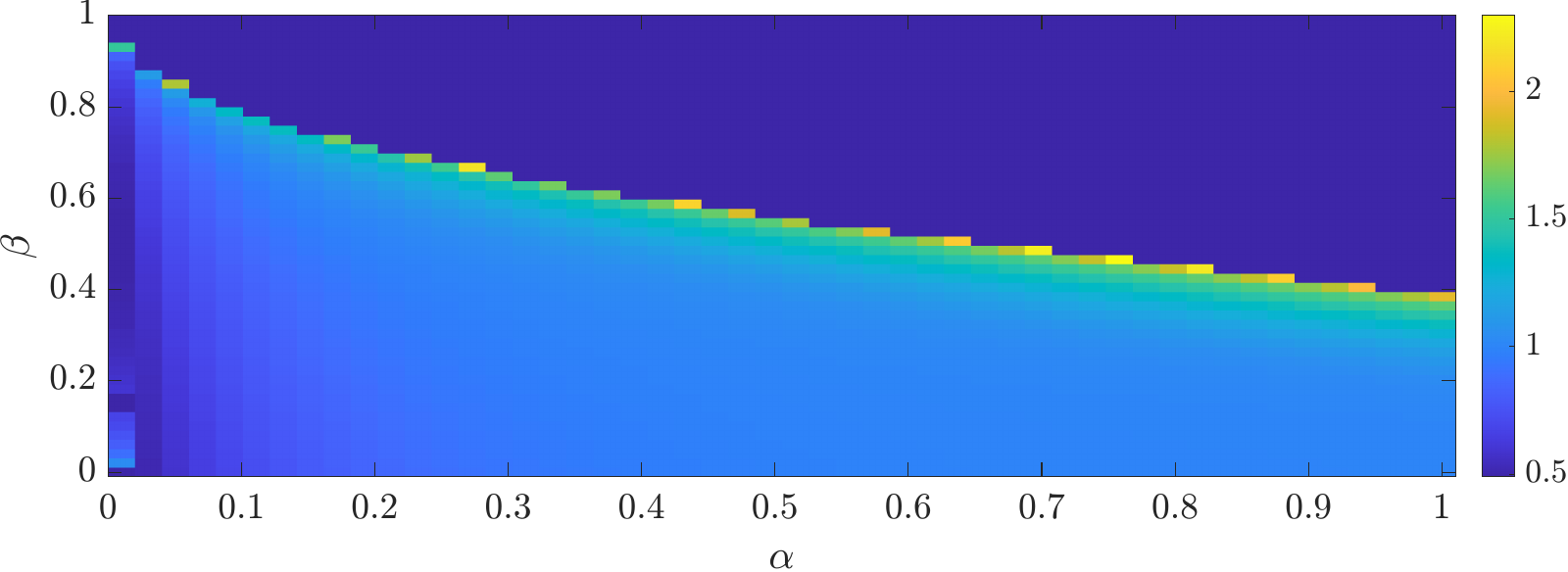}
		\includegraphics[width=\columnwidth]{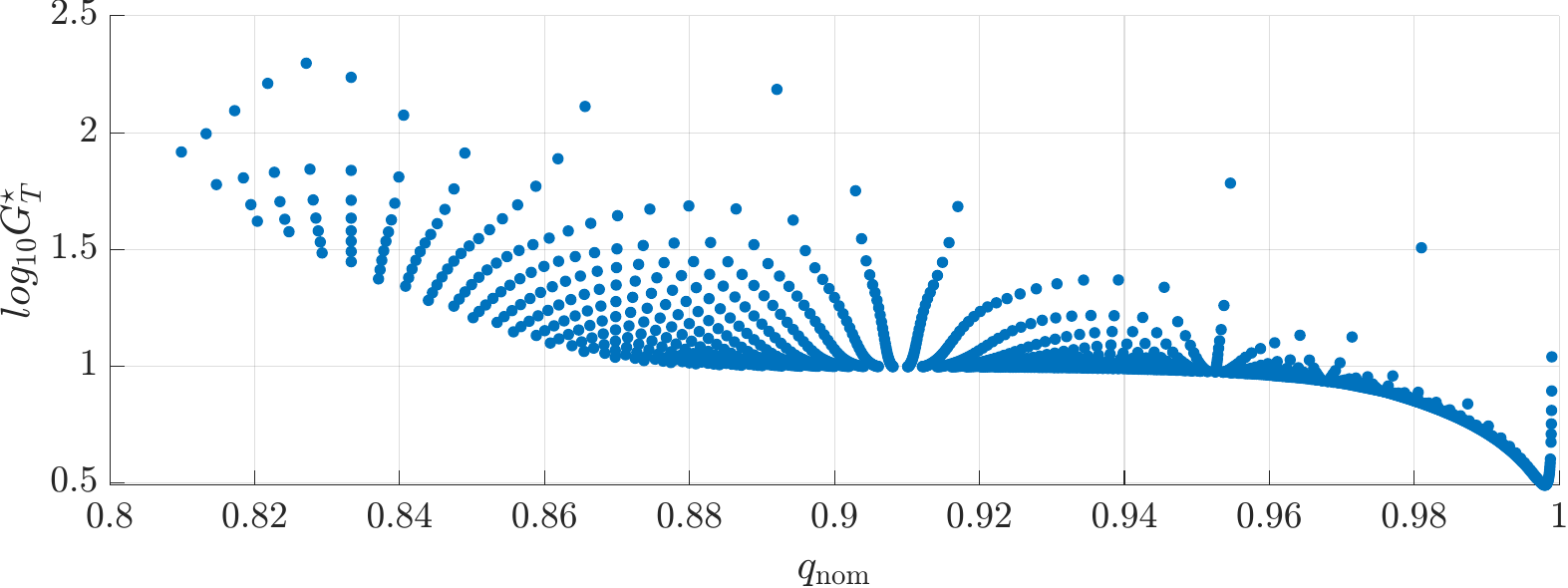}
	\else
		\includegraphics[width=.6\columnwidth]{heatmap_HB.pdf}
    	\includegraphics[width=.6\columnwidth]{tradeoff_conv_sens_HB.pdf}
	\fi
    \caption{Top: \gls{IQC}-certified finite-horizon gain $G_T^\star(\alpha,\beta)$ for \gls{HB} from the data-injection channel to the terminal current-iterate sensitivity. Bottom: certified sensitivity gain for \gls{HB} compared with a nominal quadratic convergence proxy.}
    \label{fig:hb-sensitivity-phase-diagram}
\end{figure}

\begin{figure}[t]
    \centering
	\ifTwoColumn
		\includegraphics[width=\columnwidth]{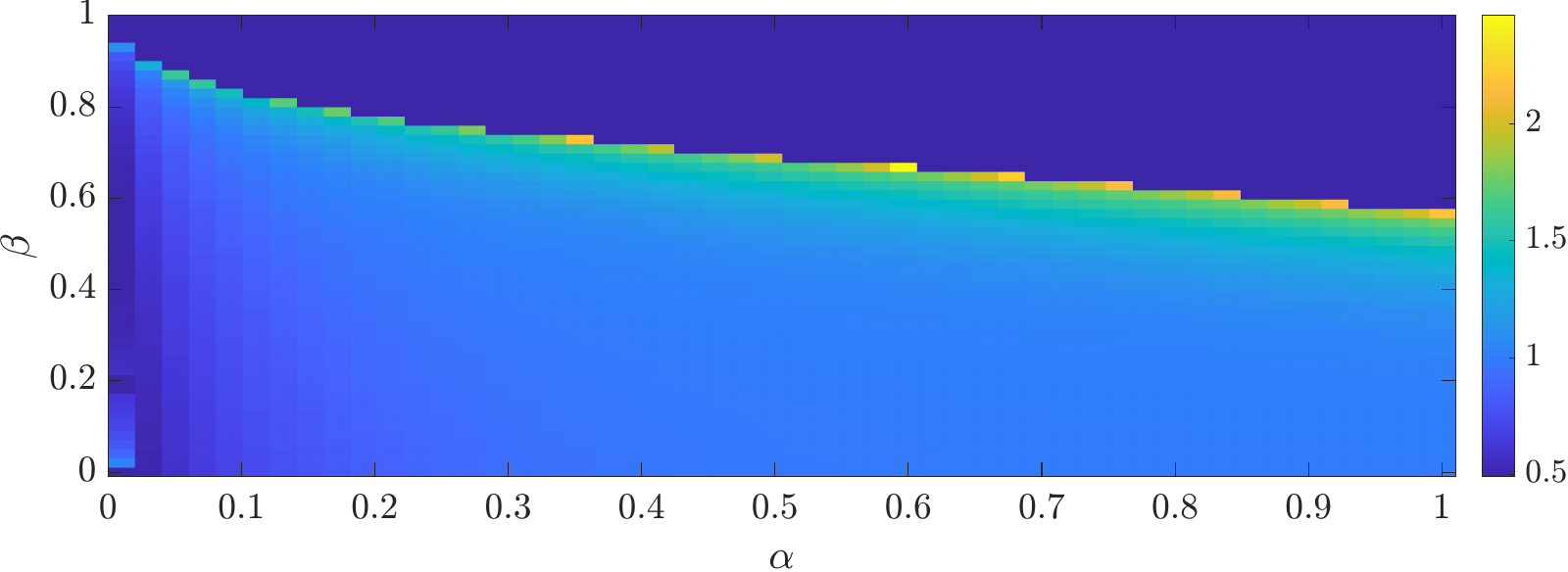}
		\includegraphics[width=\columnwidth]{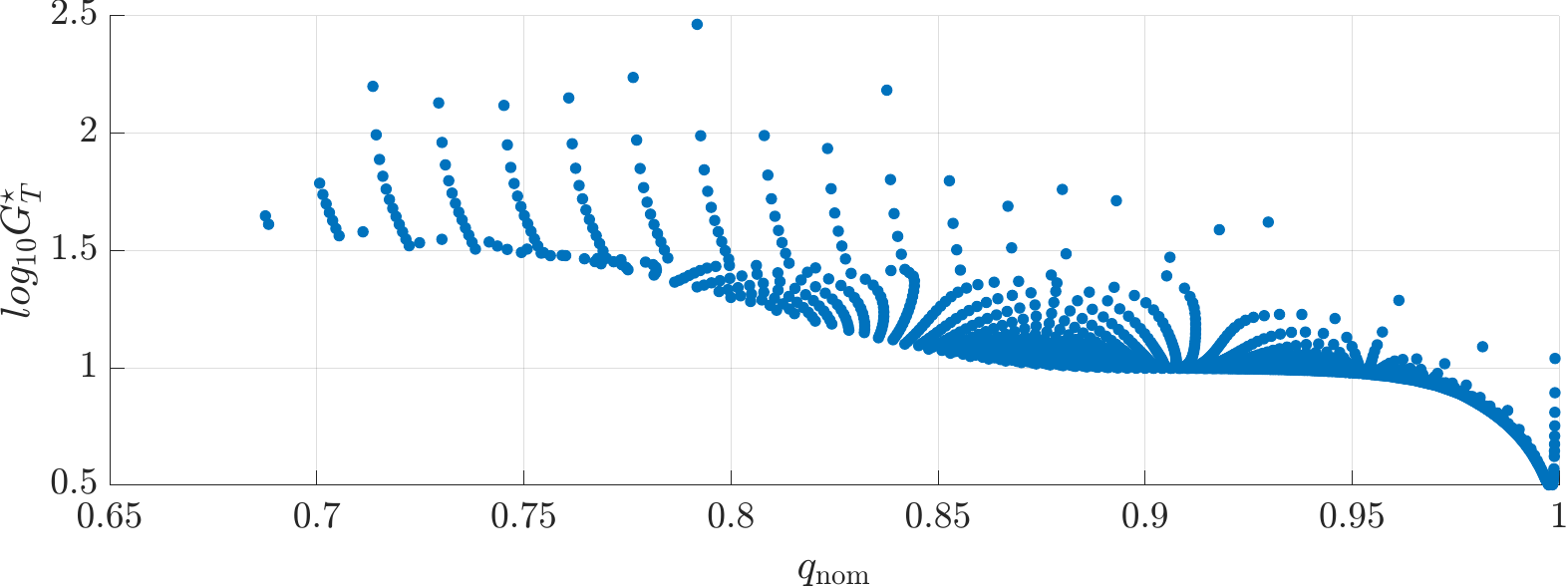}
	\else
		\includegraphics[width=.6\columnwidth]{heatmap_NAG.pdf}
    	\includegraphics[width=.6\columnwidth]{tradeoff_conv_sens_NAG.pdf}
	\fi
    \caption{Top: \gls{IQC}-certified finite-horizon gain $G_T^\star(\alpha,\beta)$ for \gls{NAG} from the data-injection channel to the terminal current-iterate sensitivity. Bottom: certified sensitivity gain for \gls{NAG} compared with a nominal quadratic convergence proxy.}
    \label{fig:nag-sensitivity-phase-diagram}
\end{figure}

Along the same line, Figs.~\ref{fig:hb-sensitivity-phase-diagram} and \ref{fig:nag-sensitivity-phase-diagram} report the certified finite-horizon gain $G_T^\star=G_T^\star(\alpha,\beta)\eqdef\sqrt{\gamma_2^\star(1 - \rho^{\star^{2T}})/(1-\rho^{\star^{2}})}$, $T=50$, which is further contrasted with $q_{\rm nom}$ in the scatter plot, for \gls{HB} and \gls{NAG}, respectively. Both heatmaps show that $G_T^\star(\alpha,\beta)$ significantly increases as the parameters approach the high-momentum boundary of the feasible region, with \gls{NAG} exhibiting a larger set of combinations of $(\alpha,\beta)$ guaranteeing the feasibility of \eqref{eq:gamma-star-rho}-\eqref{eq:G-star}. Even if the term $\zeta_N$ scales as $O(1/N)$, notice that the dynamical gain multiplying it can vary significantly across acceleration parameters, with larger values of $\alpha$ that substantially reduce the range of $\beta$ for which the certificate remains small. The scatter plots, instead, further confirm that nominal convergence and data sensitivity are distinct design objectives. 
Parameter choices yielding more favorable, i.e., smaller, $q_{\rm nom}$ may exhibit substantially larger certified sensitivity gain, whereas the smallest sensitivity gains occur in more conservative regimes where $q_{\rm nom}$ is close to one.

\section{Application to learned feedback control}\label{sec:learning_based_dyn}
When the data-driven map $\hat{\Phi}(\cdot,\mathcal{S}_N)$ is constructed to replicate a certain control action, its ability to generalize beyond the finite dataset $\mathcal{S}_N$ used for training is essential, since it is deployed online within the feedback interconnection in \eqref{eq:dynamics}.

Consequently, the relevant inputs to $\hat{\Phi}$ are not only the samples contained in $\mathcal{S}_N$, but the outputs generated by the closed-loop system itself. These outputs may differ from the empirical training distribution, and approximation errors of $\hat{\Phi}$ are recursively propagated through the dynamics. Therefore, a small empirical error on $\mathcal{S}_N$ is in general insufficient to guarantee satisfactory closed-loop behavior, since poor out-of-sample performance may lead to, e.g., degraded performance, constraint violations, or instability at worst.

This motivates the need for finite-sample generalization guarantees on the learned operator. For instance, by writing
\[
\hat{\Phi}(y,\mathcal{S}_N)=\Phi(y)+e_N(y),
\]
the learning error $e_N:\R^p\times\Xi^N\to\R^m$ enters the state equation as a feedback-dependent disturbance,
\[
x_{k+1}=Ax_k+B\Phi(y_k)+B e_N(y_k).
\]
Hence, stability and performance properties depend on controlling $e_N(y)$ over the set of outputs that can be visited by the closed-loop system, rather than only at the training samples $\mc S_N$. 
Generalization is thus the mechanism that connects finite offline data to reliable online closed-loop guarantees.

\subsection{\gls{KRR}-based residual feedback control}
\begin{figure}[t]
    \centering
	\ifTwoColumn
		\includegraphics[width=\columnwidth]{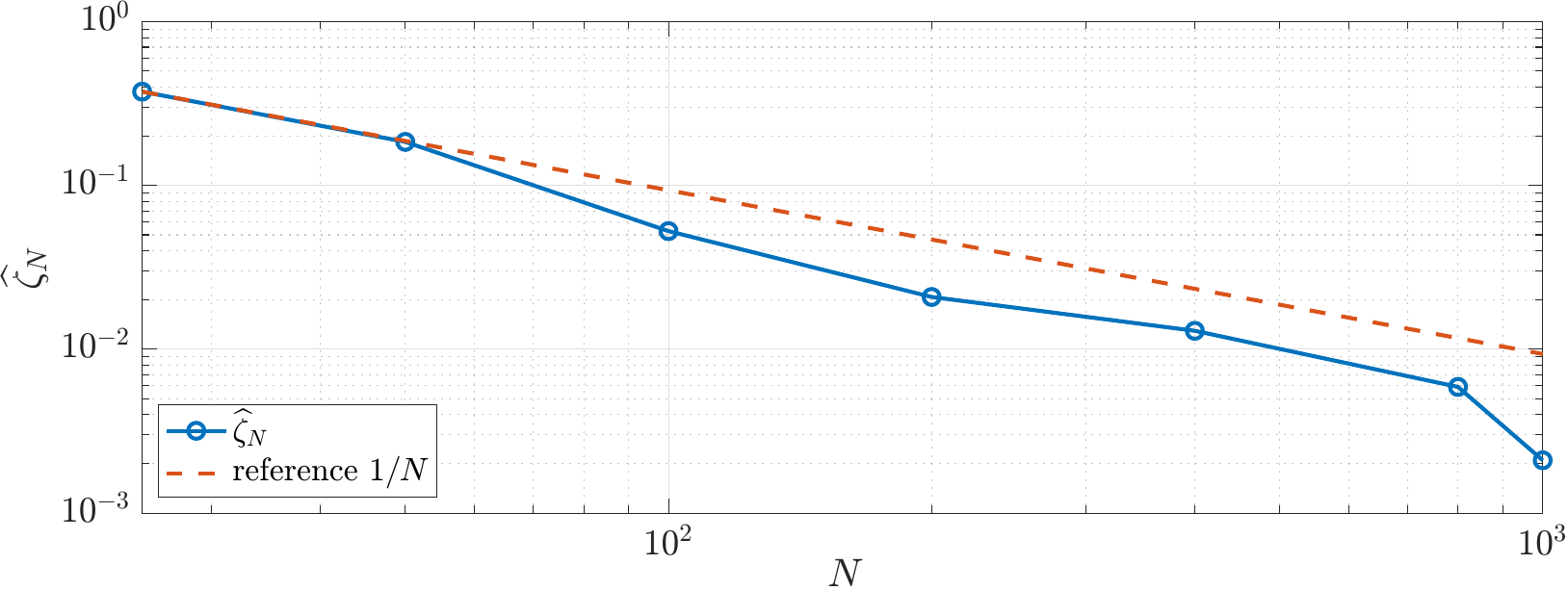}
		\includegraphics[width=\columnwidth]{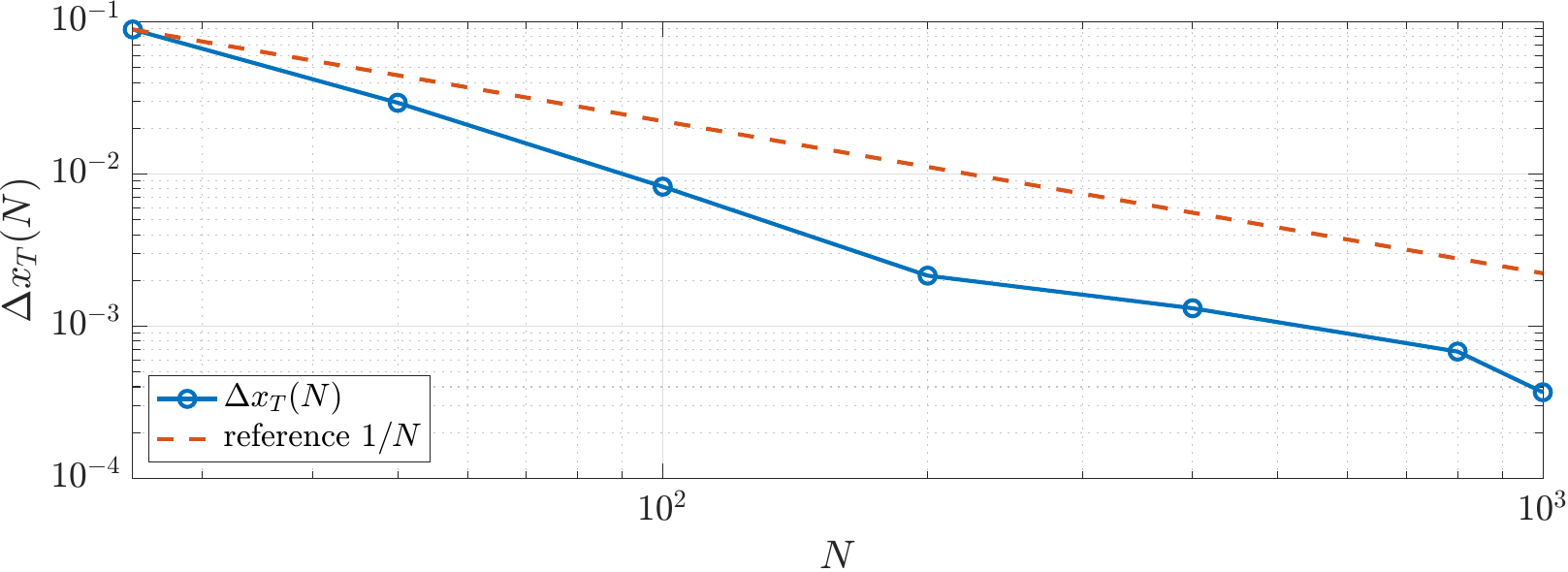}
	\else
		\includegraphics[width=.6\columnwidth]{KRR_one_sample_sensitivity.pdf}
			\includegraphics[width=.6\columnwidth]{KRR_closed_loop_sensitivity.pdf}
	\fi
	\caption{  
    Top: Empirical one-sample sensitivity $\widehat\zeta_N$ for different sample sizes. Bottom: Terminal closed-loop sensitivity between trajectories generated by neighbouring learned controllers.}
    \label{fig:krr-control-sensitivity}
\end{figure}

We consider here a control-oriented example in which the data-dependent operator is learned offline from demonstrations. The plant is a discrete-time linear system $x_{k+1}=Ax_k+Bu_k$ equipped with a stabilizing nominal linear feedback $u=Kx$. In addition, we assume that an expert operator applies a nonlinear residual correction of the form $\mathrm{r}(x)=-a\tanh(v^\top x)$,
so that the measured input is actually $u^{(i)} = Kx^{(i)} + \mathrm{r}(x^{(i)})$.
The residual map is thus learned via \gls{KRR} from samples $\mc S_N=\{(x^{(i)},\mathrm{r}(x^{(i)})\}_{i=1}^N$,
yielding a data-dependent, state feedback residual controller $\hat\Phi(\cdot, \mc S_N)$. The learned closed loop dynamics hence reads as:
\[
    x_{k+1}
    =
    (A+BK)x_k+B\hat\Phi(x_k, \mc S_N).
\]
Replacing one sample then produces a neighbouring dataset $\mc S_N^j$, a different learned controller $\hat\Phi(\cdot, \mc S_N^j)$, and hence a different closed-loop trajectory. This is precisely the data sensitivity mechanism captured by our framework discussed in \S \ref{sec:sensitivity_approach}, where the perturbation
$
    \hat\Phi(x'_k, \mc S_N)
    -
    \hat\Phi(x'_k, \mc S_N^j)
$
acts as an exogenous disturbance injected into the closed-loop sensitivity dynamics.

The simulations are conducted by considering $n=2$, $m=1$ and linear controller $K=[-0.55 \ -0.85]$, with matrices
\[
A=\begin{bmatrix}
1.1 & 0.2\\
0 & 0.95	
\end{bmatrix},~B=\begin{bmatrix}
	0\\1
\end{bmatrix}.
\]
In addition, we set $a=0.6$ and $v=[1 \ -1]^\top$, and to fill the dataset $\mc S_N$ we uniformly sample the set $[-2 \ 2]^2\reqdef\mc G$ so that each $x^{(i)}\in\mc G$. For the \gls{KRR} technique, we use standard Gaussian radial basis functions with bandwidth $0.75$, and regularization term $0.001$.
For each \(N\in\{25, 50, 100, 200, 400, 800, 1000\}\), we estimate the empirical one-sample sensitivity over $30$ possible replacements of the learned residual controller as follows:
\[
    \widehat\zeta_N
    \eqdef
    \underset{j\in\{1,\ldots,N\}}{\textrm{max}}~\underset{x\in\mc G}{\textrm{max}}~
    |
    \hat\Phi(x, \mc S_N)
    -
    \hat\Phi(x, \mc S_N^j)
    |.
\]

Regardless of the specific sample replacement, both trajectories are initialized with $x_0=[1.5 \ -1]^\top$. For $T=40$, we thus estimate the corresponding terminal closed-loop sensitivity as:
\[
    \Delta x_T(N)
    \eqdef
    \underset{j\in\{1,\ldots,N\}}{\textrm{max}}~
    \left\|
    x_T-x'_T
    \right\|.
\]
The two quantities are reported in Fig.~\ref{fig:krr-control-sensitivity}, corroborating the statistical-dynamical decomposition underlying the proposed bound. From the top plot, the learned-controller sensitivity $\widehat\zeta_N$ decreases approximately with the reference rate \(1/N\), as predicted by the one-sample sensitivity of regularized kernel methods---see Corollary~\ref{cor:KRR}. The terminal closed-loop sensitivity exhibits the same sample-size decay, showing that the reduction in the statistical perturbation is inherited by the closed-loop trajectories. Even in a control setting with a nonlinear learned residual feedback, the observed terminal sensitivity follows the decay of the data injection channel.

\section{Conclusion}\label{sec:conclusion}
We developed a system-theoretic interpretation of generalization in learning-enabled dynamical systems by modeling sample replacement as an exogenous disturbance acting on a data sensitivity system. The operator increment is encoded through an \gls{IQC}, while dissipativity yields a finite-gain certificate from data injection to terminal loss discrepancy. The resulting uniform stability bound separates the learned operator’s one-sample sensitivity from the dynamical amplification induced by the interconnection. Computable via \gls{SDP}, the certificate provides a tractable tool for comparing learning dynamics. It recovers classical results for \gls{GD}, extends to \gls{HB} and \gls{NAG} when Euclidean contractivity fails, and shows that convergence and data sensitivity are distinct design objectives. A \gls{KRR} residual-control example further illustrates its applicability to operators learned offline and deployed in feedback.

Besides considering richer dynamic \gls{IQC} multipliers, nonquadratic storage functions, or focusing on more general nonlinear system realizations, an interesting direction to investigate consists in the algorithm synthesis for generalization. Since the algorithm realization enters the \gls{MI} \eqref{eq:main-lmi-induced-gain} directly,  one can in principle co-design parameters according to both convergence and data sensitivity objectives, as well as its generalization certificate, by means of a joint, possibly convex, program.

\bibliographystyle{IEEEtran}
\bibliography{algstab_dissipativity}

\appendix
\counterwithin{theorem}{section}

\section{Appendix}\label{app:1overN}
\subsection{One-sample sensitivity bounds for data-dependent operators}

\begin{table}[t]
\centering
\caption{Main examples of mappings satisfying
$\zeta_N\propto1/N$.}
\label{tab:bounds}
\begin{tabular}{p{2.8cm} p{2.5cm} p{2.4cm}}
\toprule
Mapping class & Assumption & Bound\\
\midrule
\gls{SAA} mapping & $\|\phi(y,\xi)\|\le B$ & $2B/N$\\
Empirical moments & bounded moments & $2L(\sum_k B_k^2)^{1/2}/N$\\
Regularized \gls{ERM} & $\mu$-strong convexity & $2B/\mu N$\\
KRR in RKHS
&
$\mathrm k(y,y)\le B^2$, $\norm{u}\le U$
&
$\frac{2 B^2U}{\mu N}
\left(1+\frac{B}{\sqrt{\mu}}\right)$\\
Contractive & contraction $q<1$ & $2LB/(1-q)N$\\
\bottomrule
\end{tabular}
\end{table}

Given a collection of samples $\mathcal S_N\in\Xi^N$, associated one-sample replacement dataset $\mathcal S_N^j$, for any $j\in\{1,\ldots,N\}$, and mapping $f:\R^d\to\R^n$, define the empirical averaging operator:
\[
\wexpected{\prob{P}}{f(\xi)}
\eqdef
\frac1N\sum_{i=1}^N f(\xi^{(i)}),
\]
and the corresponding empirical average after replacing the \(j\)-th sample by \(\xi'\):
\[
\wexpectedj{\prob{P}}{f(\xi)}
\eqdef
\frac1N
\left(
\sum_{i=1, i\neq j}^N f(\xi^{(i)})+f(\xi')
\right).
\]

The main classes and realated bounds are summarized in Tab.~\ref{tab:bounds}. We then start by considering \gls{SAA} mappings:
\begin{proposition}
Let us consider $\hat\Phi(y,\mathcal S_N)
=
g(
\wexpected{\prob{P}}{f_1(\xi)},
\ldots,
\wexpected{\prob{P}}{f_s(\xi)}
)$,
where each $f_k:\R^d\to\R^n$, and $g:\R^{sn}\to\R^m$ is $L$-Lipschitz. Assume that there exist constants \(B_1,\ldots,B_s>0\) such that $\|f_k(\xi)\|\le B_k$ for all $k=1,\ldots,s$ and $\xi\in\Xi$. 
Then, $\zeta_N\le2L
\sqrt{
\sum_{k=1}^s B_k^2}/N$.
\hfill$\square$
\end{proposition}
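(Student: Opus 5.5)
The plan is to reduce the claim to the one-sample sensitivity of each empirical average, and then propagate that bound through the Lipschitz map $g$. Fix $y\in\R^p$, $j\in\{1,\ldots,N\}$ and the replacement $\xi'\in\Xi$. Introduce the stacked vectors
\[
m\eqdef\col\left(\wexpected{\prob{P}}{f_1(\xi)},\ldots,\wexpected{\prob{P}}{f_s(\xi)}\right),\qquad
m^j\eqdef\col\left(\wexpectedj{\prob{P}}{f_1(\xi)},\ldots,\wexpectedj{\prob{P}}{f_s(\xi)}\right)\in\R^{sn}.
\]
With this notation, $\hat\Phi(y,\mc S_N)=g(m)$ and $\hat\Phi(y,\mc S_N^j)=g(m^j)$. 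Since $g$ is $L$-Lipschitz, $\|\hat\Phi(y,\mc S_N)-\hat\Phi(y,\mc S_N^j)\|\le L\|m-m^j\|$, where the norm on $\R^{sn}$ is the Euclidean one.

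The second step is to compute each block difference. In the two empirical averages, all summands with index $i\neq j$ appear identically and cancel. What remains is
\[
\wexpected{\prob{P}}{f_k(\xi)}-\wexpectedj{\prob{P}}{f_k(\xi)}=\frac1N\left(f_k(\xi^{(j)})-f_k(\xi')\right).
\]
By the triangle inequality and the uniform bound $\|f_k\|\le B_k$, the norm of this block is at most $2B_k/N$.

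The third step stacks the blocks, giving $\|m-m^j\|^2=\sum_{k=1}^s\|\cdot\|^2\le 4\sum_k B_k^2/N^2$. Taking square roots and multiplying by $L$ yields $2L\sqrt{\sum_k B_k^2}/N$. This bound is uniform in $y$, $j$ and $\xi'$, so taking the supremum gives the bound on $\zeta_N$ required in Standing Assumption~\ref{standing:learning_mapping}.(ii).

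There is no real obstacle here. The only point needing care is consistency of norms: the Lipschitz constant of $g$ must be understood with respect to the Euclidean norm on the stacked space $\R^{sn}$, so that the blockwise bounds combine as a root-sum-of-squares. The possible dependence of the $f_k$ on $y$ does not matter, since the bounds $B_k$ are uniform.
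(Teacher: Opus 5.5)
Your proposal is correct and follows the paper's proof essentially verbatim. The paper likewise cancels the common summands to get $\frac1N\bigl(f_k(\xi^{(j)})-f_k(\xi')\bigr)$ with norm at most $2B_k/N$, stacks the $s$ blocks in the Euclidean norm to obtain $\frac{2}{N}\sqrt{\sum_{k=1}^s B_k^2}$, and concludes via the $L$-Lipschitz continuity of $g$, with the same implicit Euclidean-norm convention on $\R^{sn}$ that you point out (your only blemish is cosmetic: naming the stacked vector $m$ clashes with the output dimension $m$ of $g$).
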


\begin{proof}
For every $k=1,\ldots,s$, we have
$
\wexpected{\prob{P}}{f_k(\xi)}-\wexpectedj{\prob{P}}{f_k(\xi)}
=
\frac1N
\left(
f_k(\xi^{(j)})-f_k(\xi')
\right),
$
which yields:
\[
	\|\wexpected{\prob{P}}{f_k(\xi)}-\wexpectedj{\prob{P}}{f_k(\xi)}\|
	\le
	\frac1N
	\left(
	\|f_k(\xi^{(j)})\|+\|f_k(\xi')\|
	\right)
	\le
	\frac{2B_k}{N}.
\]
Hence,
$
\sqrt{
\sum_{k=1}^s
\|\wexpected{\prob{P}}{f_k(\xi)}-\wexpectedj{\prob{P}}{f_k(\xi)}\|^2
}
\le
\frac{2}{N}
\sqrt{
\sum_{k=1}^s B_k^2}.
$
Using the $L$-Lipschitz continuity of $g$, one directly obtains:
\ifTwoColumn
	\[
	\begin{aligned}
	&\|\Phi(y,\mathcal S_N)-\Phi(y,\mathcal S_N^j)\|=
	\left\|
	g\left(\wexpected{\prob{P}}{f_1(\xi)},\ldots,\wexpected{\prob{P}}{f_s(\xi)}\right)\right.\\
	&\left.
	\hspace{4cm}-
	g\left(\wexpectedj{\prob{P}}{f_k(\xi)},\ldots,\wexpected{\prob{P}}{f_s(\xi)}\right)
	\right\| \\
	&\le
	L
	\left(
	\sum_{k=1}^s
	\|\wexpected{\prob{P}}{f_k(\xi)}-\wexpectedj{\prob{P}}{f_k(\xi)}\|^2
	\right)^{1/2}\le
	\frac{2L}{N}
	\left(
	\sum_{k=1}^s B_k^2
	\right)^{1/2},
	\end{aligned}
	\]
\else
	\[
	\begin{aligned}
	\|\Phi(y,\mathcal S_N)-\Phi(y,\mathcal S_N^j)\|&=
	\left\|
	g\left(\wexpected{\prob{P}}{f_1(\xi)},\ldots,\wexpected{\prob{P}}{f_s(\xi)}\right)-
	g\left(\wexpectedj{\prob{P}}{f_k(\xi)},\ldots,\wexpected{\prob{P}}{f_s(\xi)}\right)
	\right\| \\
	&\le
	L
	\left(
	\sum_{k=1}^s
	\|\wexpected{\prob{P}}{f_k(\xi)}-\wexpectedj{\prob{P}}{f_k(\xi)}\|^2
	\right)^{1/2}\le
	\frac{2L}{N}
	\left(
	\sum_{k=1}^s B_k^2
	\right)^{1/2},
	\end{aligned}
	\]
\fi
concluding the proof.
\end{proof}

\begin{corollary}[\gls{SAA} mappings]\label{cor:SAA_mappings}
If $f(\xi)=\phi(y,\xi)$, with $\phi:\R^p\times\Xi\to\R^n$ and $\|\phi(y,\xi)\|\le B$, for all $y\in\R^p$, $\xi\in\Xi$, then $\zeta_N\le 2B/N$.
\hfill$\square$
\end{corollary}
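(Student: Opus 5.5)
The corollary follows by specializing the preceding proposition to a single averaged map, i.e., $s=1$, with $g$ the identity. Fix any $y\in\R^p$ and set $f_1(\xi)\eqdef\phi(y,\xi)$. Then the \gls{SAA} operator reads $\hat\Phi(y,\mc S_N)=\wexpected{\prob{P}}{f_1(\xi)}$, which is the proposition's form with $g:\R^n\to\R^n$ the identity. The identity is $1$-Lipschitz, so $L=1$. The bound $\|\phi(y,\xi)\|\le B$ gives $B_1=B$. The proposition then yields, for this fixed $y$ and any $j$,
\[
\|\hat\Phi(y,\mc S_N)-\hat\Phi(y,\mc S_N^j)\|\le \frac{2}{N}\sqrt{B^2}=\frac{2B}{N}.
\]

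The constant $B$ is uniform in $y$ and in $\xi$, and the replaced sample $\xi'$ also lies in $\Xi$. Hence this estimate holds for every $y\in\R^p$ and every $j\in\{1,\ldots,N\}$. Taking the supremum over $j$ and $y$ gives $\zeta_N\le 2B/N$, in the sense of Standing Assumption~\ref{standing:learning_mapping}.(ii).

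The argument can also be checked directly without the proposition. Replacing one sample changes the average only through the $j$-th term, so
\[
\hat\Phi(y,\mc S_N)-\hat\Phi(y,\mc S_N^j)=\frac1N\bigl(\phi(y,\xi^{(j)})-\phi(y,\xi')\bigr).
\]
The triangle inequality then gives at most $(B+B)/N=2B/N$.

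The only point needing care is that the proposition is stated with $f_k$ independent of $y$. Since $y$ enters only as a fixed parameter while the data are varied, applying the proposition pointwise in $y$ is legitimate, and the uniformity of $B$ is what lets the bound survive the supremum over $y$.
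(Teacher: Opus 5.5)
Your proposal is correct and follows the paper's route: the corollary is given without a separate proof, as the instance $s=1$, $g$ the identity ($L=1$, $B_1=B$) of the preceding proposition. Your direct check, bounding $\frac1N\bigl(\phi(y,\xi^{(j)})-\phi(y,\xi')\bigr)$ by $2B/N$ uniformly in $y$ and $j$, is exactly the one-line computation in that proposition's proof.
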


\begin{corollary}[Lipschitz transformations of \gls{SAA}]
If $f(\xi)=\phi(y,\xi)$, with $\phi:\R^p\times\Xi\to\R^n$ and $\|\phi(x,\xi)\|\le B$, for all $y\in\R^p$, $\xi\in\Xi$, and $L$-Lipschitz $\wexpected{\prob{P}}{f(\xi)} \mapsto g(y,\wexpected{\prob{P}}{f(\xi)})$, then $\zeta_N\le2LB/N$.
\hfill$\square$
\end{corollary}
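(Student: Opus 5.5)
This is a direct specialization of the preceding proposition to the case $s=1$, so the plan is to invoke that result with a particular choice of $g$ and then check its hypotheses. Fix any $y\in\R^p$ and set $f_1(\xi)\eqdef\phi(y,\xi)$. Then
\[
\hat\Phi(y,\mc S_N)=g\big(y,\wexpected{\prob{P}}{f_1(\xi)}\big),
\]
where $g$ is read as a function of its second argument only, with $y$ frozen. This puts $\hat\Phi(y,\cdot)$ in the form covered by the proposition with a single averaged feature, i.e.\ $s=1$.

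The hypotheses transfer directly:
\begin{itemize}
\item the bound $\|\phi(y,\xi)\|\le B$ gives $B_1=B$;
\item the Lipschitz assumption on the map $\wexpected{\prob{P}}{f(\xi)}\mapsto g(y,\wexpected{\prob{P}}{f(\xi)})$ supplies the constant $L$ needed for the outer map.
\end{itemize}
The proposition then gives, for every $j$,
\[
\|\hat\Phi(y,\mc S_N)-\hat\Phi(y,\mc S_N^j)\|\le \frac{2L\sqrt{B_1^2}}{N}=\frac{2LB}{N}.
\]
For transparency, the argument can also be written directly in two lines:
\[
\wexpected{\prob{P}}{f(\xi)}-\wexpectedj{\prob{P}}{f(\xi)}=\frac{1}{N}\big(\phi(y,\xi^{(j)})-\phi(y,\xi')\big),
\]
whose norm is at most $2B/N$ by the triangle inequality, and the Lipschitz property of $g(y,\cdot)$ multiplies this by $L$.

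It remains to take the supremum over $j\in\{1,\ldots,N\}$ and $y\in\R^p$. This step is where care is needed. The bound above is uniform only if both $B$ and $L$ do not depend on $y$. For $B$ this holds by hypothesis. For $L$, I will read the assumption as requiring $L$ to be uniform in $y$. If only a local Lipschitz constant were available, the same argument would give the local version $\zeta_N(\mathcal Y)$ discussed after Standing Assumption~\ref{standing:learning_mapping}, with $L$ replaced by $\sup_{y\in\mathcal Y}L(y)$.

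Under that reading, the supremum gives $\zeta_N\le 2LB/N$, which is the claim. As a consistency check, taking $g$ to be the identity gives $L=1$ and recovers Corollary~\ref{cor:SAA_mappings}.
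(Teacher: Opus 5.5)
Your proposal is correct and follows the same route the paper takes implicitly: specialize the preceding proposition to $s=1$ with $y$ frozen, $B_1=B$, and $L$ the Lipschitz constant of $g(y,\cdot)$, which gives $2LB/N$. Your remark that $L$ must be uniform in $y$ for the supremum to give $\zeta_N$ is a sensible clarification of a hypothesis the paper leaves implicit.
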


\begin{corollary}[Finite collections of empirical moments]
If $f_k(\xi)=\phi_k(y,\xi)$, with $\phi_k:\R^p\times\Xi\to\R^n$ and $\|\phi_k(y,\xi)\|\le B_k$, for all $y\in\R^p$, $\xi\in\Xi$, $k=1,\ldots, s$, and $L$-Lipschitz $(\wexpected{\prob{P}}{f_1(\xi)}, \ldots, \wexpected{\prob{P}}{f_s(\xi)}) \mapsto g(y,\wexpected{\prob{P}}{f_1(\xi)}, \ldots, \wexpected{\prob{P}}{f_s(\xi)})$, then $\zeta_N\le2L \sqrt{\sum_{k=1}^s B_k^2}/N$.
\hfill$\square$
\end{corollary}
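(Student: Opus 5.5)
The plan is to reduce the statement to the preceding Proposition, applied pointwise in $y$. Fix an arbitrary $y\in\R^p$, an index $j\in\{1,\ldots,N\}$ and a replacement sample $\xi'\in\Xi$. For this fixed $y$, set $f_k(\cdot)\eqdef\phi_k(y,\cdot)$, $k=1,\ldots,s$. Each $f_k$ maps $\Xi\subseteq\R^d$ into $\R^n$ and obeys $\|f_k(\xi)\|\le B_k$ for all $\xi\in\Xi$, since the bound on $\phi_k$ is assumed uniformly in $y$.

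Next, define the frozen outer map $g_y(\cdot)\eqdef g(y,\cdot)$ on $\R^{sn}$. By hypothesis, $g_y$ is $L$-Lipschitz in the stacked vector of empirical averages, with $L$ independent of $y$. Hence $\hat\Phi(y,\mc S_N)=g_y(\wexpected{\prob{P}}{f_1(\xi)},\ldots,\wexpected{\prob{P}}{f_s(\xi)})$ falls exactly within the setting of the Proposition. The same representation, with $\wexpectedj{\prob{P}}{\cdot}$ in place of $\wexpected{\prob{P}}{\cdot}$, gives $\hat\Phi(y,\mc S_N^j)$.

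Invoking the Proposition then yields
\[
\|\hat\Phi(y,\mc S_N)-\hat\Phi(y,\mc S_N^j)\|\le \frac{2L}{N}\Big(\sum_{k=1}^s B_k^2\Big)^{1/2}.
\]
For completeness, the Proposition's argument runs as follows: each averaged coordinate changes by exactly $(f_k(\xi^{(j)})-f_k(\xi'))/N$, whose norm is at most $2B_k/N$. Stacking these $s$ blocks in the Euclidean norm gives $2\sqrt{\sum_k B_k^2}/N$, and the Lipschitz property of $g_y$ transfers this bound to the output.

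Because the right-hand side depends on neither $y$, $j$, nor $\xi'$, taking the supremum over $y\in\R^p$ and $j$ gives the claimed value of $\zeta_N$ in Standing Assumption~\ref{standing:learning_mapping}.(ii).

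The only delicate step is the uniformity requirement. Both the bounds $B_k$ and the Lipschitz constant $L$ must hold uniformly in $y$, so that the pointwise application of the Proposition produces a $y$-independent bound. I would state this explicitly when freezing $y$. The Euclidean norm on the stacked vector in $\R^{sn}$ must also be the norm with respect to which $g$ is Lipschitz, as in the Proposition.
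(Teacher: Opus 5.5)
Your proposal is correct and takes the same approach as the paper, which gives no separate proof and treats the corollary as the preceding Proposition applied for each fixed $y$, with $f_k=\phi_k(y,\cdot)$ and outer map $g(y,\cdot)$. Your explicit point that $B_k$ and $L$ must hold uniformly in $y$ is exactly what makes the resulting bound $2L\sqrt{\sum_{k=1}^s B_k^2}/N$ independent of $y$, $j$ and $\xi'$, and hence a valid choice of $\zeta_N$.
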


\begin{corollary}[Kernel mean embeddings]
Let $\mathcal H$ be a Hilbert space. If $f(\xi)=k_y(\xi)\in\mathcal H$, with $\|k_y(\xi)\|_{\mathcal H}\le B$ for all $y\in\R^p$, $\xi\in\Xi$, then $\zeta_N\le 2B/N$.
\hfill$\square$
\end{corollary}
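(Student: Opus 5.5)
The statement is the last corollary, on kernel mean embeddings, so I will obtain it by specializing the proposition on empirical averages, working directly in the Hilbert space $\mathcal H$ rather than in $\R^n$. The map is $\hat\Phi(y,\mathcal S_N)=\wexpected{\prob{P}}{k_y(\xi)}=\frac1N\sum_{i=1}^N k_y(\xi^{(i)})\in\mathcal H$. Its neighboring counterpart $\hat\Phi(y,\mathcal S_N^j)$ is $\wexpectedj{\prob{P}}{k_y(\xi)}$. The distance is measured in $\|\cdot\|_{\mathcal H}$, which is the natural reading of $\zeta_N$ when the operator takes values in $\mathcal H$.

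First, fix $y\in\R^p$ and $j\in\{1,\ldots,N\}$, and subtract the two empirical averages. All terms with $i\neq j$ cancel by linearity of the sum in $\mathcal H$. This leaves exactly $\frac1N\bigl(k_y(\xi^{(j)})-k_y(\xi')\bigr)$, mirroring the first step of the proof of the proposition above.

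Second, apply the triangle inequality for the Hilbert norm together with the uniform bound $\|k_y(\xi)\|_{\mathcal H}\le B$, evaluated at both $\xi^{(j)}$ and $\xi'$. This gives $\|\hat\Phi(y,\mathcal S_N)-\hat\Phi(y,\mathcal S_N^j)\|_{\mathcal H}\le 2B/N$.

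Third, since the bound does not depend on $y$, $j$, $\xi^{(j)}$ or $\xi'$, take the supremum over these to conclude $\zeta_N\le 2B/N$. Equivalently, the result is the proposition with $s=1$ and $g$ the identity ($L=1$), once the argument is read verbatim with $\R^n$ replaced by $\mathcal H$.

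There is essentially no obstacle. The only point to check is that the proposition's proof uses nothing specific to $\R^n$: it relies only on linearity of the average, the triangle inequality and Lipschitzness of $g$, all of which hold in any normed space, in particular in $\mathcal H$. The case $k_y(\xi)=\mathrm k(y,\xi)$ with $\mathrm k(y,y)\le B^2$ is covered as well, because $|\mathrm k(y,\xi)|\le\sqrt{\mathrm k(y,y)\mathrm k(\xi,\xi)}$ by Cauchy--Schwarz in the RKHS.
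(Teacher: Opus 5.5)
Your proof is correct and follows the route the paper intends: the paper states this corollary without proof as the Hilbert-space reading of the empirical-averages proposition with $s=1$ and $g$ the identity, where the $i\neq j$ terms cancel and the triangle inequality with $\|k_y(\xi)\|_{\mathcal H}\le B$ gives $2B/N$. Your closing aside is unnecessary and slightly imprecise: for the scalar choice $k_y(\xi)=\mathrm k(y,\xi)$, Cauchy--Schwarz only gives $|\mathrm k(y,\xi)|\le B\sup_{z\in\Xi}\sqrt{\mathrm k(z,z)}$, so you also need a bound on $\mathrm k(z,z)$ over $\Xi$, and the constant equals $B$ only when that supremum is at most one.
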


We now discuss the family of empirical minimizers:
\begin{proposition}\label{prop:EM}
Let $\theta\mapsto g(y,\theta)+\wexpected{\prob{P}}{\phi(y,\theta,\xi)}$ be $\mu$-strongly convex for all $y\in\R^p$, $\xi\in\Xi$, and replacement $\xi'\in\Xi$, with $g:\R^p\times\R^q\to\R$, $\phi:\R^p\times\R^q\times\Xi\to\R$, and $\hat\Phi(y,\mathcal S_N)\eqdef\mathrm{argmin}_{\theta\in\R^q}
\left\{g(y,\theta)+\wexpected{\prob{P}}{\phi(y,\theta,\xi)}
\right\}$. In addition, assume each sample loss being differentiable with uniformly bounded gradient, $\|\nabla \phi(y,\theta,\xi)\|\le B$, for all $y\in\R^p$, $\theta\in\R^q$, $\xi\in\Xi$. Then, $\zeta_N\le 2B/\mu N$.
\hfill$\square$
\end{proposition}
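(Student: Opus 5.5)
This is the classical stability argument for strongly convex regularized ERM, in the spirit of \cite{bousquet2002stability}. Fix $y\in\R^p$ and $j$, and write
\[
F(\theta)\eqdef g(y,\theta)+\wexpected{\prob{P}}{\phi(y,\theta,\xi)},\qquad F^j(\theta)\eqdef g(y,\theta)+\wexpectedj{\prob{P}}{\phi(y,\theta,\xi)}.
\]
Both are $\mu$-strongly convex; the hypothesis is stated for every replacement $\xi'$ precisely so that this covers $F^j$. Let $\theta^\star=\hat\Phi(y,\mathcal S_N)$ and $\theta^j=\hat\Phi(y,\mathcal S_N^j)$ be the unique minimizers, which exist by strong convexity. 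The difference of the two objectives is the single-sample term
\[
F(\theta)-F^j(\theta)=\tfrac1N\big(\phi(y,\theta,\xi^{(j)})-\phi(y,\theta,\xi')\big),
\]
exactly as in the SAA proposition above.

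The key step combines strong convexity with first-order optimality. Since $F$ is $\mu$-strongly convex and minimized at $\theta^\star$, we have $F(\theta^j)-F(\theta^\star)\ge\frac\mu2\|\theta^j-\theta^\star\|^2$, and symmetrically $F^j(\theta^\star)-F^j(\theta^j)\ge\frac\mu2\|\theta^j-\theta^\star\|^2$. Summing the two inequalities and regrouping gives
\[
\mu\|\theta^j-\theta^\star\|^2\le (F-F^j)(\theta^j)-(F-F^j)(\theta^\star).
\]
The right-hand side is $\frac1N\big[\big(\phi(\cdot,\xi^{(j)})-\phi(\cdot,\xi')\big)(\theta^j)-\big(\phi(\cdot,\xi^{(j)})-\phi(\cdot,\xi')\big)(\theta^\star)\big]$.

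Next we bound this right-hand side using the uniform gradient bound. Since $\|\nabla_\theta\phi\|\le B$, each $\theta\mapsto\phi(y,\theta,\xi)$ is $B$-Lipschitz by the mean value theorem. Hence $\theta\mapsto\phi(y,\theta,\xi^{(j)})-\phi(y,\theta,\xi')$ is $2B$-Lipschitz, and the right-hand side is at most $\frac{2B}{N}\|\theta^j-\theta^\star\|$. Dividing by $\|\theta^j-\theta^\star\|$ (the case where it vanishes is trivial) gives $\|\theta^\star-\theta^j\|\le 2B/(\mu N)$. This bound is uniform in $y$ and $j$, so taking the supremum yields $\zeta_N\le 2B/(\mu N)$.

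There is an alternative route that avoids summing function values. Write the optimality conditions $\nabla F(\theta^\star)=0$ and $\nabla F^j(\theta^j)=0$, then use strong monotonicity of $\nabla F$:
\[
\mu\|\theta^\star-\theta^j\|^2\le\langle\nabla F(\theta^\star)-\nabla F(\theta^j),\theta^\star-\theta^j\rangle=\langle\nabla F^j(\theta^j)-\nabla F(\theta^j),\theta^\star-\theta^j\rangle.
\]
The gradient difference is $\frac1N(\nabla\phi(\xi')-\nabla\phi(\xi^{(j)}))$, whose norm is at most $2B/N$. However, this route requires $g$ to be differentiable, which is not assumed. I would therefore use the function-value argument, which needs only the gradient bound on $\phi$.

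The main subtlety lies in interpreting the hypotheses rather than in the estimate itself. One must read $\nabla\phi$ as the gradient in $\theta$, and make sure the strong convexity hypothesis covers the replaced objective $F^j$ as well as $F$. No other obstacle is expected.
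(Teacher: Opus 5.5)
Your proof is correct, and your main route differs from the paper's. The paper follows essentially your ``alternative route''. From $\nabla F(\theta_N)=0$ and $\nabla F^j(\theta_N^j)=0$, it uses $\mu$-strong monotonicity of $\nabla F^j$ and Cauchy--Schwarz to get $\mu\|\theta_N-\theta_N^j\|\le\|\nabla F^j(\theta_N)\|=\|\nabla F^j(\theta_N)-\nabla F(\theta_N)\|\le 2B/N$. This is your idea with the gradient discrepancy evaluated at $\theta_N$ instead of at $\theta^j$.

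Your primary argument instead adds the two quadratic-growth inequalities at the minimizers, which follow from the definition of strong convexity alone. It lets $g$ cancel and only invokes the $2B$-Lipschitz continuity of $\phi(y,\cdot,\xi^{(j)})-\phi(y,\cdot,\xi')$. This buys generality. No differentiability of $g$ is needed; the paper tacitly assumes it by writing $\nabla g$, although the statement does not, as you rightly point out. Moreover, $\phi$ need only be $B$-Lipschitz in $\theta$, so nonsmooth regularizers (e.g.\ an elastic-net penalty) and nonsmooth losses (e.g.\ hinge) are covered.

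The gradient route, in turn, is a bit more direct and is the template the paper reuses for the KRR corollary. It also reaches the sharp constant using strong convexity of $F^j$ only. Your summation needs strong convexity of both $F$ and $F^j$; with just one you would get $4B/(\mu N)$. Since the hypothesis covers both objectives, both arguments give $\zeta_N\le 2B/(\mu N)$ uniformly in $y$, $j$ and $\xi'$.
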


\begin{proof}
Let $\theta_N$ be the unique minimizers for $\theta\mapsto g(y,\theta)+\wexpected{\prob{P}}{\phi(y,\theta,\xi)}$, which satisfies:
$
	\nabla g(y,\theta_N)+\nabla\wexpected{\prob{P}}{\phi(y,\theta_N,\xi)}
	=\nabla g(y,\theta_N)+\frac1N\sum_{i=1}^N \nabla\phi(y,\theta_N,\xi^{(i)})=0.
$
Similarly, $\theta^j_N$ is the unique minimizer of $\theta\mapsto g(y,\theta)+\wexpectedj{\prob{P}}{\phi(y,\theta,\xi)}$ so that:
$
	\nabla g(y,\theta^j_N)+\nabla\wexpectedj{\prob{P}}{\phi(y,\theta^j_N,\xi)}
	=\nabla g(y,\theta^j_N)+\frac1N\left(\sum_{\stackrel{i=1}{i\neq j}}^N \nabla\phi(y,\theta^j_N,\xi^{(i)}) + \nabla\phi(y,\theta^j_N,\xi')\right)=0.
$
In view of the $\mu$-strong monotonicity of $\theta\mapsto \nabla g(y,\theta)+\nabla\wexpectedj{\prob{P}}{\phi(y,\theta,\xi)}$, consequence of the $\mu$-strong convexity of the underlying function, and the application of the Cauchy-Schwarz inequality we readily obtain:
\ifTwoColumn
	\[
	\begin{aligned}
		\mu\|\theta_N-\theta_N^j\|&=\mu\|\hat\Phi(y,\mc S_N)-\hat\Phi(y,\mc S^j_N)\|\\
		&\le
		\left\|
		\nabla g(y,\theta_N)+\nabla\wexpectedj{\prob{P}}{\phi(y,\theta_N,\xi)}\right.\\
		&\hspace{1.5cm}\left.\qquad-\nabla g(y,\theta^j_N)-\nabla\wexpectedj{\prob{P}}{\phi(y,\theta^j_N,\xi)}
		\right\|\\
		&=\|\nabla g(y,\theta_N)+\nabla\wexpectedj{\prob{P}}{\phi(y,\theta_N,\xi)}\|.
	\end{aligned}
	\]
\else
	\[
	\begin{aligned}
		\mu\|\theta_N-\theta_N^j\|&=\mu\|\hat\Phi(y,\mc S_N)-\hat\Phi(y,\mc S^j_N)\|\\
		&\le
		\left\|
		\nabla g(y,\theta_N)+\nabla\wexpectedj{\prob{P}}{\phi(y,\theta_N,\xi)}-\nabla g(y,\theta^j_N)-\nabla\wexpectedj{\prob{P}}{\phi(y,\theta^j_N,\xi)}
		\right\|\\
		&=\|\nabla g(y,\theta_N)+\nabla\wexpectedj{\prob{P}}{\phi(y,\theta_N,\xi)}\|.
	\end{aligned}
	\]
\fi
By exploiting the fact that $\nabla g(y,\theta_N)+\nabla\wexpected{\prob{P}}{\phi(y,\theta_N,\xi)}=0$:
\ifTwoColumn
	\[
	\begin{aligned}
		&\|\nabla g(y,\theta_N)+\nabla\wexpectedj{\prob{P}}{\phi(y,\theta_N,\xi)}\|\\
		&=\left\|\nabla g(y,\theta_N)+\nabla\wexpectedj{\prob{P}}{\phi(y,\theta_N,\xi)}\right.\\
		&\hspace{4cm}\left. -\nabla g(y,\theta_N) - \nabla\wexpected{\prob{P}}{\phi(y,\theta_N,\xi)}\right\|\\
		&\le
		\frac1N
		\left(
		\|\nabla\phi(x,\theta_N,\xi')\|
		+
		\|\nabla\phi(x,\theta_N,\xi^{(j)})\|
		\right)
		\le
		\frac{2B}{N},
	\end{aligned}
	\]
\else
	\[
	\begin{aligned}
		\|\nabla g(y,\theta_N)+\nabla\wexpectedj{\prob{P}}{\phi(y,\theta_N,\xi)}\|&=\left\|\nabla g(y,\theta_N)+\nabla\wexpectedj{\prob{P}}{\phi(y,\theta_N,\xi)}\right.\\
		&\hspace{4cm}\left. -\nabla g(y,\theta_N) - \nabla\wexpected{\prob{P}}{\phi(y,\theta_N,\xi)}\right\|\\
		&\le
		\frac1N
		\left(
		\|\nabla\phi(x,\theta_N,\xi')\|
		+
		\|\nabla\phi(x,\theta_N,\xi^{(j)})\|
		\right)
		\le
		\frac{2B}{N},
	\end{aligned}
	\]
\fi
since the two gradients differ only from the replaced sample, directly yielding the desired bound on $\zeta_N$.
\end{proof}

\begin{corollary}[Regularized empirical minimizers]\label{cor:empirical_minimizer}
	If $g(y,\theta)=\mu\norm{\theta}^2/2$ and $\theta \mapsto \wexpected{\prob{P}}{\phi(y,\theta,\xi)}$ is convex for all $y\in\R^p$, $\xi\in\Xi$, and $\xi'\in\Xi$, then $\zeta_N\le 2B/\mu N$.
	\hfill$\square$
\end{corollary}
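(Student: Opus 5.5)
The plan is to reduce Corollary~\ref{cor:empirical_minimizer} to Proposition~\ref{prop:EM}. That requires checking two hypotheses of the proposition: uniform $\mu$-strong convexity of every objective involved, and the bounded-gradient condition. Once these hold, the bound $\zeta_N\le 2B/\mu N$ follows directly.

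\textbf{Strong convexity.} With $g(y,\theta)=\mu\norm{\theta}^2/2$, the map $\theta\mapsto g(y,\theta)$ is $\mu$-strongly convex, with $\nabla_\theta g=\mu\theta$ and $\mu$-strongly monotone gradient. By assumption, $\theta\mapsto\wexpected{\prob{P}}{\phi(y,\theta,\xi)}$ is convex for every $y$ and every dataset, including the replaced one, since the statement quantifies over every $\xi'\in\Xi$. A convex function plus a $\mu$-strongly convex one is $\mu$-strongly convex. Hence both $\theta\mapsto g+\wexpected{\prob{P}}{\phi}$ and $\theta\mapsto g+\wexpectedj{\prob{P}}{\phi}$ are $\mu$-strongly convex.

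\textbf{Existence and the key inequality.} Strong convexity gives a unique minimizer $\theta_N$, respectively $\theta_N^j$, characterized by the first-order condition. So $\hat\Phi(y,\cdot)$ is well defined, and the key inequality used in the proof of Proposition~\ref{prop:EM} holds:
\[
\mu\|\theta_N-\theta_N^j\|\le\|F^j(\theta_N)-F^j(\theta_N^j)\|,
\]
where $F^j$ is the gradient of the replaced objective. This follows from strong monotonicity combined with Cauchy--Schwarz.

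\textbf{Bounded gradients.} Here I take the hypothesis $\|\nabla\phi\|\le B$ to be inherited from the standing context of Proposition~\ref{prop:EM}. The corollary implicitly carries $B$ in its bound, so this is the intended reading. Under it, $F^j(\theta_N)-F(\theta_N)=\frac1N\left(\nabla\phi(y,\theta_N,\xi')-\nabla\phi(y,\theta_N,\xi^{(j)})\right)$, which has norm at most $2B/N$. Since $F(\theta_N)=0$ and $F^j(\theta_N^j)=0$, this gives $\|\theta_N-\theta_N^j\|\le 2B/(\mu N)$ uniformly in $y$ and $j$. Taking the supremum yields $\zeta_N\le 2B/\mu N$.

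\textbf{Main obstacle.} There is no real analytical difficulty. The only delicate point is making the implicit assumption explicit, namely that the gradient bound $B$ of Proposition~\ref{prop:EM} is still in force. A second, minor point is that convexity must hold for the perturbed empirical average as well, not only the original one. This matters because strong monotonicity is applied to $F^j$, not to $F$. If only per-sample convexity of $\phi$ were assumed, both averages would be convex automatically, so either reading suffices.
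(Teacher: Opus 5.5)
Your proposal is correct and follows the route the paper intends. The corollary has no separate proof in the paper because it is the specialization of Proposition~\ref{prop:EM} obtained by noting that $\mu\norm{\theta}^2/2$ plus a convex empirical term is $\mu$-strongly convex for both $\mc S_N$ and $\mc S_N^j$; your remarks that the gradient bound $B$ carries over from Proposition~\ref{prop:EM} and that strong monotonicity must hold for the replaced objective are the right checks.
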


\begin{corollary}[KRR in RKHS]\label{cor:KRR}
	Let $\mathcal H$ be the \gls{RKHS} induced by a
	kernel $\mathrm k:\mathbb R^p\times\mathbb R^p\to\mathbb R_{\ge0}$,
	and let $\mathcal H^m$ denote the corresponding product space for
	vector-valued maps $f:\mathbb R^p\to\mathbb R^m$. In addition, 
	let
	$\mathcal Y\subseteq\mathbb R^p$ be such that $\textnormal{\textrm{sup}}_{y\in\mathcal Y}~\mathrm k(y,y) \le B^2$.
	With sample
	$\xi=\col(y,u)\in\mathcal Y\times\mathbb R^m$ satisfying $\|u\|\le U$,
	for $\mu>0$ we define the \gls{KRR} estimator as
	\[
	\hat f_N
	\eqdef
	\underset{f\in\mathcal H^m}{\textnormal{\textrm{argmin}}}~
	\left\{
	\frac{\mu}{2}\|f\|_{\mathcal H^m}^2
	+
	\frac{1}{2N}\sum_{i=1}^N
	\|f(y^{(i)})-u^{(i)}\|^2
	\right\}.
	\]
	With $\hat\Phi(y,\mathcal S_N)\eqdef\hat f_N(y)$, for every neighboring dataset $\mathcal S_N^j$,
	\[
	\underset{y\in\mathcal Y}{\textnormal{\textrm{sup}}}~
	\|\hat\Phi(y,\mathcal S_N)-\hat\Phi(y,\mathcal S_N^j)\|
	\le
	\frac{2B^2 U}{\mu N}
	\left(
	1+\frac{B\sqrt{\mu}}{\mu}
	\right).
	\]
	\hfill$\square$
\end{corollary}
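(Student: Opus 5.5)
The plan is to repeat the argument of Proposition~\ref{prop:EM}, but in the Hilbert space $\mathcal H^m$ instead of $\R^q$. The estimate obtained in $\|\cdot\|_{\mathcal H^m}$ is then converted into a pointwise estimate through the reproducing property.

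Let $J_N(f)\eqdef\frac{\mu}{2}\|f\|_{\mathcal H^m}^2+\frac{1}{2N}\sum_{i=1}^N\|f(y^{(i)})-u^{(i)}\|^2$, and let $J_N^j$ be the same functional with $(y^{(j)},u^{(j)})$ replaced by $(y',u')$.

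\textbf{Reproducing property on $\mathcal H^m$.} Write $f=(f_1,\ldots,f_m)$ with $f_\ell\in\mathcal H$ and $\|f\|_{\mathcal H^m}^2=\sum_\ell\|f_\ell\|_{\mathcal H}^2$. Then $f_\ell(y)=\langle f_\ell,\mathrm k(y,\cdot)\rangle_{\mathcal H}$, and Cauchy--Schwarz applied componentwise gives
\[
\|f(y)\|^2\le \mathrm k(y,y)\|f\|_{\mathcal H^m}^2\le B^2\|f\|_{\mathcal H^m}^2,\qquad y\in\mathcal Y.
\]
Hence $\sup_{y\in\mathcal Y}\|\hat f_N(y)-\hat f_N^j(y)\|\le B\|\hat f_N-\hat f_N^j\|_{\mathcal H^m}$. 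It therefore suffices to show $\|\hat f_N-\hat f_N^j\|_{\mathcal H^m}\le \frac{2BU}{\mu N}\bigl(1+\frac{B}{\sqrt\mu}\bigr)$.

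\textbf{Strong convexity and optimality.} Both $J_N$ and $J_N^j$ are $\mu$-strongly convex and Fréchet differentiable on $\mathcal H^m$. The gradient of the $i$-th data term is $\frac1N\,\mathrm k(y^{(i)},\cdot)\,(f(y^{(i)})-u^{(i)})$, i.e.\ the element whose $\ell$-th component is $\frac1N(f_\ell(y^{(i)})-u^{(i)}_\ell)\,\mathrm k(y^{(i)},\cdot)$. Its norm equals $\frac1N\sqrt{\mathrm k(y^{(i)},y^{(i)})}\,\|f(y^{(i)})-u^{(i)}\|$.

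Exactly as in the proof of Proposition~\ref{prop:EM}, strong monotonicity of $\nabla J_N^j$, together with $\nabla J_N^j(\hat f_N^j)=0$ and $\nabla J_N(\hat f_N)=0$, yields
\[
\mu\|\hat f_N-\hat f_N^j\|_{\mathcal H^m}\le\|\nabla J_N^j(\hat f_N)-\nabla J_N(\hat f_N)\|_{\mathcal H^m}.
\]
The right-hand side involves only the two swapped data terms. It is therefore at most
\[
\frac1N\Bigl(B\|\hat f_N(y^{(j)})-u^{(j)}\|+B\|\hat f_N(y')-u'\|\Bigr).
\]

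\textbf{A priori bound on the estimator.} Comparing with $f=0$ gives $\frac{\mu}{2}\|\hat f_N\|^2_{\mathcal H^m}\le J_N(\hat f_N)\le J_N(0)\le U^2/2$. Hence $\|\hat f_N\|_{\mathcal H^m}\le U/\sqrt\mu$, and by the reproducing bound $\|\hat f_N(y)\|\le BU/\sqrt\mu$ on $\mathcal Y$. Each residual is then at most $U+BU/\sqrt\mu$. This gives $\|\hat f_N-\hat f_N^j\|_{\mathcal H^m}\le\frac{2BU}{\mu N}(1+B/\sqrt\mu)$, and multiplying by $B$ yields the claim, since $B\sqrt\mu/\mu=B/\sqrt\mu$.

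\textbf{Main obstacle.} The delicate step is the rigorous handling of the vector-valued RKHS. This means correctly identifying the gradient of the evaluation-based loss in $\mathcal H^m$ and computing its norm as $\sqrt{\mathrm k(y,y)}\,\|f(y)-u\|$, which requires the orthogonal-sum structure of $\mathcal H^m$. One also needs $y',y^{(j)}\in\mathcal Y$ so that the kernel bound applies to the swapped points.
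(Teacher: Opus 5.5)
Your proposal is correct and is essentially the paper's own proof. The shared steps are: the a priori bound $\|\hat f_N\|_{\mathcal H^m}\le U/\sqrt{\mu}$ from $J_N(\hat f_N)\le J_N(0)$; the strong-monotonicity step of Proposition~\ref{prop:EM} carried over to $\mathcal H^m$; the per-sample gradient bound $\sqrt{\mathrm k(y,y)}\,\|\hat f_N(y)-u\|\le BU(1+B/\sqrt{\mu})$; and the final pointwise conversion via the reproducing property. Only the order of presentation differs, and your explicit computation of the gradient norm through the orthogonal-sum structure of $\mathcal H^m$ fills in a detail the paper states without derivation.
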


\begin{proof}
	Let $\hat f_N$ and $\hat f_N^j$ denote the KRR minimizers associated
	with $\mathcal S_N$ and $\mathcal S_N^j$, respectively. Then, define the cost function as
	$
	J_N(f)
	\eqdef
	\mu\|f\|_{\mathcal H^m}^2/2
	+
	\sum_{i=1}^N
	\|f(y^{(i)})-u^{(i)}\|^2/2N,
	$
	along with the analogous counterpart $J_N^j$ obtained after replacing
	$\xi^{(j)}=\col(y^{(j)},u^{(j)})$ with
	$\xi'=\col(y',u')$. Since $J_N$ is $\mu$-strongly convex on
	$\mathcal H^m$, its minimizer is unique. Moreover,
	$
	J_N(\hat f_N)\le J_N(0)
	=
	\frac{1}{2N}\sum_{i=1}^N \|u^{(i)}\|^2
	\le \frac{U^2}{2},
	$
	which directly implies
	$
	\|\hat f_N\|_{\mathcal H^m}\le U/\sqrt{\mu}.
	$
	For a generic sample $\xi=\col(y,u)$, let
	$
	\phi(f,\xi)\eqdef\|f(y)-u\|^2/2.
	$
	By the reproducing property, the gradient of $\phi$ with respect to
	$f\in\mathcal H^m$ satisfies
	\[
	\|\nabla \phi(\hat f_N,\xi)\|_{\mathcal H^m}
	\le
	\sqrt{\mathrm k(y,y)}
	\,
	\|\hat f_N(y)-u\|.
	\]
	Since $y\in\mathcal Y$, $\mathrm k(y,y)\le B^2$.
	Again by the reproducing property,
	\[
	\|\hat f_N(y)\|
	\le
	B\|\hat f_N\|_{\mathcal H^m}
	\le
	\frac{BU}{\sqrt{\mu}},
	\]
	which yields
	$
	\|\nabla \phi(\hat f_N,\xi)\|_{\mathcal H^m}
	\le
	BU
	\left(
	1+\frac{B}{\sqrt{\mu}}
	\right).
	$

	Since $\hat f_N^j$ minimizes $J_N^j$ and $J_N^j$ is $\mu$-strongly
	convex, the same argument used in Proposition~\ref{prop:EM} gives
	\[
	\mu\|\hat f_N-\hat f_N^j\|_{\mathcal H^m}
	\le
	\|\nabla J_N^j(\hat f_N)-\nabla J_N(\hat f_N)\|_{\mathcal H^m}.
	\]
	Because $J_N$ and $J_N^j$ differ only in the replaced sample,
	\ifTwoColumn
		\[
		\begin{aligned}
			&\|\nabla J_N^j(\hat f_N)-\nabla J_N(\hat f_N)\|_{\mathcal H^m}\\
			&\le
			\frac{1}{N}
			\left(
			\|\nabla\phi(\hat f_N,\xi')\|_{\mathcal H^m}
			+
			\|\nabla\phi(\hat f_N,\xi^{(j)})\|_{\mathcal H^m}
			\right)\\
			&\le
			\frac{2BU}{N}
		\left(
		1+\frac{B}{\sqrt{\mu}}
		\right),
		\end{aligned}
		\]
	\else
		\[
			\|\nabla J_N^j(\hat f_N)-\nabla J_N(\hat f_N)\|_{\mathcal H^m}\le
			\frac{1}{N}
			\left(
			\|\nabla\phi(\hat f_N,\xi')\|_{\mathcal H^m}
			+
			\|\nabla\phi(\hat f_N,\xi^{(j)})\|_{\mathcal H^m}
			\right)\le
			\frac{2BU}{N}
		\left(
		1+\frac{B}{\sqrt{\mu}}
		\right),
		\]
	\fi
	and thus
	$
	\|\hat f_N-\hat f_N^j\|_{\mathcal H^m}
	\le
	\frac{2BU}{\mu N}\left(
	1+\frac{B}{\sqrt{\mu}}
	\right).
	$
	Finally, using the reproducing property once more, for every
	$y\in\mathcal Y$,
	\[
	\|\hat\Phi(y,\mathcal S_N)-\hat\Phi(y,\mathcal S_N^j)\|
	=
	\|\hat f_N(y)-\hat f_N^j(y)\|
	\le
	B\|\hat f_N-\hat f_N^j\|_{\mathcal H^m},
	\]
	which allows one to establish the desired bound.
\end{proof}

We conclude by discussing the case in which $\hat\Phi$ consists of a contractive mapping driven by empirical averages.
\begin{proposition}
	Suppose that $\hat\Phi(y,\mathcal S_N)=z_N$ where \(z_N\) is the unique solution to $z=\mc T\left(y,z,\wexpected{\prob{P}}{\phi(y,\xi)}\right)$. Assume that $\mc T$ is contractive in $z$, i.e., there exists $q\in[0,1)$ such that
	\[
	\|\mc T(y,z,u)-\mc T(y,z',u)\|
	\le
	q \|z-z'\|,
	\]
	for all $y\in\R^p$, $u\in\R^m$, and $u\mapsto \mc T(y,z,u)$ being $L$-Lipschitz. If, in addition, $\|\phi(y,\xi)\|\le B$, for all $y\in\R^p$, $\xi\in\Xi$, then $\zeta_N\le 2LB/(1-q)N$.
	\hfill$\square$
\end{proposition}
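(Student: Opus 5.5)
We prove the last proposition, on the contractive fixed-point mapping $\hat\Phi(y,\mc S_N)=z_N$. The plan mirrors the strong-monotonicity argument of Proposition~\ref{prop:EM}: a standard fixed-point perturbation estimate, with the contraction factor $q$ playing the role of $\mu$.

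Fix $y\in\R^p$, $j\in\{1,\ldots,N\}$ and the replacement $\xi'$. Write $u_N\eqdef\wexpected{\prob{P}}{\phi(y,\xi)}$ and $u_N^j\eqdef\wexpectedj{\prob{P}}{\phi(y,\xi)}$. Let $z_N$ and $z_N^j$ be the unique fixed points of $z\mapsto\mc T(y,z,u_N)$ and $z\mapsto\mc T(y,z,u_N^j)$. Existence and uniqueness follow from the assumed contractivity via Banach's theorem.

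\textbf{Step 1: bound the averaged inputs.} As in the proof of the \gls{SAA} proposition, the two averages differ only in the replaced sample, so
\[
\|u_N-u_N^j\|=\tfrac1N\|\phi(y,\xi^{(j)})-\phi(y,\xi')\|\le \tfrac{2B}{N}.
\]
This is exactly Corollary~\ref{cor:SAA_mappings}.

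\textbf{Step 2: compare the fixed points.} Use the fixed-point identities and add and subtract $\mc T(y,z_N^j,u_N)$:
\[
\|z_N-z_N^j\|\le\|\mc T(y,z_N,u_N)-\mc T(y,z_N^j,u_N)\|+\|\mc T(y,z_N^j,u_N)-\mc T(y,z_N^j,u_N^j)\|.
\]
The first term is at most $q\|z_N-z_N^j\|$ by contractivity in $z$. The second is at most $L\|u_N-u_N^j\|$ by the $L$-Lipschitz continuity in $u$.

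\textbf{Step 3: rearrange.} Since $q<1$, rearranging gives
\[
(1-q)\|z_N-z_N^j\|\le L\cdot 2B/N.
\]
Taking the supremum over $y$ and $j$ yields $\zeta_N\le 2LB/((1-q)N)$.

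The main subtlety is in Step 2: the contraction and Lipschitz constants must be uniform in both $y$ and the input argument $u$. The Lipschitz constant is needed at the intermediate point $z_N^j$, so $L$-Lipschitzness in $u$ has to hold for every $z$, not just at $z_N$. I read the hypothesis as holding for all $(y,z)$; under that reading the rest of the argument is routine.
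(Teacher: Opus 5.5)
Your proof is correct and follows the paper's argument essentially verbatim. The paper likewise invokes Corollary~\ref{cor:SAA_mappings} for the $2B/N$ bound on the averaged inputs, inserts the same intermediate term $\mc T(y,z_N^j,u_N)$, applies contractivity in $z$ and $L$-Lipschitzness in $u$ at $z_N^j$, and rearranges using $q<1$; your observation that the Lipschitz bound must hold uniformly in $z$ matches how the paper implicitly reads the hypothesis.
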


\begin{proof}
In view of Corollary~\ref{cor:SAA_mappings}, $\|\wexpected{\prob{P}}{\phi(y,\xi)}-\wexpectedj{\prob{P}}{\phi(y,\xi)}\|\le 2B/N$.
Let $z_N$ and $z_N^j$ be the fixed points corresponding to $\mc T(y,z,\wexpected{\prob{P}}{\phi(y,\xi)})$ and $\mc T(y,z,\wexpectedj{\prob{P}}{\phi(y,\xi)})$, respectively. Thus, we immediately obtain inequalities:
\ifTwoColumn
	\[
	\begin{aligned}
	\|z_N-z_N^j\|
	&=
	\|\mc T(y,z_N,\wexpected{\prob{P}}{\phi(y,\xi)})-\mc T(y,z^j_N,\wexpectedj{\prob{P}}{\phi(y,\xi)})\| \\
	&\le
	\|\mc T(y,z_N,\wexpected{\prob{P}}{\phi(y,\xi)})-\mc T(y,z^j_N,\wexpected{\prob{P}}{\phi(y,\xi)})\| \\
	&\quad+
	\|\mc T(y,z^j_N,\wexpected{\prob{P}}{\phi(y,\xi)})-\mc T(y,z^j_N,\wexpectedj{\prob{P}}{\phi(y,\xi)})\| \\
	&\le
	q\|z_N-z_N^j\|
	+
	L\|\wexpected{\prob{P}}{\phi(y,\xi)}-\wexpectedj{\prob{P}}{\phi(y,\xi)}\|.
	\end{aligned}
	\]
\else
	\[
	\begin{aligned}
	\|z_N-z_N^j\|
	&=
	\|\mc T(y,z_N,\wexpected{\prob{P}}{\phi(y,\xi)})-\mc T(y,z^j_N,\wexpectedj{\prob{P}}{\phi(y,\xi)})\| \\
	&\le
	\|\mc T(y,z_N,\wexpected{\prob{P}}{\phi(y,\xi)})-\mc T(y,z^j_N,\wexpected{\prob{P}}{\phi(y,\xi)})\| \\
	&\hspace{6cm}+
	\|\mc T(y,z^j_N,\wexpected{\prob{P}}{\phi(y,\xi)})-\mc T(y,z^j_N,\wexpectedj{\prob{P}}{\phi(y,\xi)})\| \\
	&\le
	q\|z_N-z_N^j\|
	+
	L\|\wexpected{\prob{P}}{\phi(y,\xi)}-\wexpectedj{\prob{P}}{\phi(y,\xi)}\|.
	\end{aligned}
	\]
\fi
Rearranging the terms, $(1-q)\|z_N-z_N^j\|\le L\|\wexpected{\prob{P}}{\phi(y,\xi)}-\wexpectedj{\prob{P}}{\phi(y,\xi)}\|\le 2LB/N$, which together with $\|z_N-z_N^j\|=\|\hat\Phi(y,\mathcal S_N)-\hat\Phi(y,\mathcal S^j_N)\|$ concludes the proof.
\end{proof}


\ifTwoColumn
\begin{IEEEbiography}[{\includegraphics[width=1in,height=1.25in,clip,keepaspectratio]{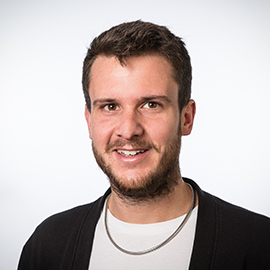}}]{Filippo Fabiani}
	is an Associate Professor at the IMT School for Advanced Studies Lucca, Italy. He received the B.Sc. degree in Bio-Engineering, the M.Sc. degree in Automatic Control Engineering, and the Ph.D. degree in Automatic Control, all from the University of Pisa, in 2012, 2015, and 2019 respectively. In 2018-2019 he was post-doctoral Research Fellow in the Delft Center for Systems and Control at TU Delft, the Netherlands, while in 2019-2022 he was a post-doctoral Research Assistant in the Control Group at the Department of Engineering Science, University of Oxford, United Kingdom. He is currently an Associate Editor for the IEEE Control Systems Letters. His research interests include game theory, data-driven and robust control of uncertain systems, and machine learning, with applications in energy systems, smart grids and smart cities.
\end{IEEEbiography}
\vfill\null 
\fi

\end{document}